\documentclass[11pt]{article}
\usepackage{color, graphicx, amssymb, amsmath, amsthm, fullpage, dsfont}
\usepackage{tikz,pgf,braids}
\usetikzlibrary{backgrounds}
\usepackage{color}
\usepackage[colorlinks=true,urlcolor=webblue,linkcolor=webgreen,filecolor=webblue,citecolor=webgreen,pdfpagemode=UseOutlines,pdfstartview=FitH,pdfpagelayout=OneColumn,bookmarks=true]{hyperref}

\definecolor{webgreen}{rgb}{0,.5,0}
\definecolor{webblue}{rgb}{0,0,.5}

%    Q-circuit version 2
%    Copyright (C) 2004  Steve Flammia & Bryan Eastin
%    Last modified on: 9/16/2011
%
%    This program is free software; you can redistribute it and/or modify
%    it under the terms of the GNU General Public License as published by
%    the Free Software Foundation; either version 2 of the License, or
%    (at your option) any later version.
%
%    This program is distributed in the hope that it will be useful,
%    but WITHOUT ANY WARRANTY; without even the implied warranty of
%    MERCHANTABILITY or FITNESS FOR A PARTICULAR PURPOSE.  See the
%    GNU General Public License for more details.
%
%    You should have received a copy of the GNU General Public License
%    along with this program; if not, write to the Free Software
%    Foundation, Inc., 59 Temple Place, Suite 330, Boston, MA  02111-1307  USA

% Thanks to the Xy-pic guys, Kristoffer H Rose, Ross Moore, and Daniel Müllner,
% for their help in making Qcircuit work with Xy-pic version 3.8.  
% Thanks also to Dave Clader, Andrew Childs, Rafael Possignolo, Tyson Williams,
% Sergio Boixo, Cris Moore, Jonas Anderson, and Stephan Mertens for helping us test 
% and/or develop the new version.

\usepackage{xy}
\xyoption{matrix}
\xyoption{frame}
\xyoption{arrow}
\xyoption{arc}

\usepackage{ifpdf}
\ifpdf
\else
\PackageWarningNoLine{Qcircuit}{Qcircuit is loading in Postscript mode.  The Xy-pic options ps and dvips will be loaded.  If you wish to use other Postscript drivers for Xy-pic, you must modify the code in Qcircuit.tex}
%    The following options load the drivers most commonly required to
%    get proper Postscript output from Xy-pic.  Should these fail to work,
%    try replacing the following two lines with some of the other options
%    given in the Xy-pic reference manual.
\xyoption{ps}
\xyoption{dvips}
\fi

% The following resets Xy-pic matrix alignment to the pre-3.8 default, as
% required by Qcircuit.
\entrymodifiers={!C\entrybox}

\newcommand{\bra}[1]{{\left\langle{#1}\right\vert}}
\newcommand{\ket}[1]{{\left\vert{#1}\right\rangle}}
    % Defines Dirac notation. %7/5/07 added extra braces so that the commands will work in subscripts.
\newcommand{\qw}[1][-1]{\ar @{-} [0,#1]}
    % Defines a wire that connects horizontally.  By default it connects to the object on the left of the current object.
    % WARNING: Wire commands must appear after the gate in any given entry.
\newcommand{\qwx}[1][-1]{\ar @{-} [#1,0]}
    % Defines a wire that connects vertically.  By default it connects to the object above the current object.
    % WARNING: Wire commands must appear after the gate in any given entry.

    % Defines a classical wire that connects horizontally.  By default it connects to the object on the left of the current object.
    % WARNING: Wire commands must appear after the gate in any given entry.

    % Defines a classical wire that connects vertically.  By default it connects to the object above the current object.
    % WARNING: Wire commands must appear after the gate in any given entry.
\newcommand{\gate}[1]{*+<.6em>{#1} \POS ="i","i"+UR;"i"+UL **\dir{-};"i"+DL **\dir{-};"i"+DR **\dir{-};"i"+UR **\dir{-},"i" \qw}
    % Boxes the argument, making a gate.

    % Inserts a measurement meter.
    % In case you're wondering, the constants .778em and .322em specify
    % one quarter of a circle with radius 1.1em.
    % The points added at + and - <2.2em,2.2em> are there to strech the
    % canvas, ensuring that the size is unaffected by erratic spacing issues
    % with the arc.

    % Inserts a measurement bubble with user defined text.

    % Inserts a measurement tab with user defined text.

    % Inserts a D-shaped measurement gate with user defined text.

    % Draws a multiple qubit measurement bubble starting at the current position and spanning #1 additional gates below.
    % #2 gives the label for the gate.
    % You must use an argument of the same width as #2 in \ghost for the wires to connect properly on the lower lines.

    % Draws a multiple qubit D-shaped measurement gate starting at the current position and spanning #1 additional gates below.
    % #2 gives the label for the gate.
    % You must use an argument of the same width as #2 in \ghost for the wires to connect properly on the lower lines.
\newcommand{\control}{*!<0em,.025em>-=-<.2em>{\bullet}}
    % Inserts an unconnected control.

    % Inserts a unconnected control-on-0.
\newcommand{\ctrl}[1]{\control \qwx[#1] \qw}
    % Inserts a control and connects it to the object #1 wires below.

    % Inserts a control-on-0 and connects it to the object #1 wires below.
\newcommand{\targ}{*+<.02em,.02em>{\xy ="i","i"-<.39em,0em>;"i"+<.39em,0em> **\dir{-}, "i"-<0em,.39em>;"i"+<0em,.39em> **\dir{-},"i"*\xycircle<.4em>{} \endxy} \qw}
    % Inserts a CNOT target.

    % Inserts half a swap gate.
    % Must be connected to the other swap with \qwx.
\newcommand{\multigate}[2]{*+<1em,.9em>{\hphantom{#2}} \POS [0,0]="i",[0,0].[#1,0]="e",!C *{#2},"e"+UR;"e"+UL **\dir{-};"e"+DL **\dir{-};"e"+DR **\dir{-};"e"+UR **\dir{-},"i" \qw}
    % Draws a multiple qubit gate starting at the current position and spanning #1 additional gates below.
    % #2 gives the label for the gate.
    % You must use an argument of the same width as #2 in \ghost for the wires to connect properly on the lower lines.
\newcommand{\ghost}[1]{*+<1em,.9em>{\hphantom{#1}} \qw}
\newcommand{\Qcircuit}{\xymatrix @*=<0em>}
    % Defines \Qcircuit as an \xymatrix with entries of default size 0em.

    % Draws a wire or connecting line to the element #1 rows down and #2 columns forward.

    % Same as \ghost except it omits the wire leading to the left. 

\title{Classical simulation of Yang-Baxter gates}
\author{Gorjan Alagic\footnote{Institute for Quantum Information and
    Matter, California Institute of Technology, Pasadena, CA.}
~and Aniruddha Bapat\footnote{California Institute of Technology,
    Pasadena, CA.}
~and Stephen Jordan\footnote{National Institute of
    Standards and Technology, Gaithersburg,
    MD. \texttt{stephen.jordan@nist.gov}}}
\date{}

\def\identity{I}
\def\CC{\mathbb C}

\def\EE{\mathbb E}

\def\opnm{\operatorname}

\newcommand{\re}{\text{Re}}
\newcommand{\im}{\text{Im}}

\newtheorem{theorem}{Theorem}

\newtheorem{definition}{Definition}
\newtheorem{lemma}{Lemma}

\begin{document}
\bibliographystyle{plain}
\maketitle

\begin{abstract}
A unitary operator that satisfies the constant Yang-Baxter equation immediately yields a unitary representation of the braid group $B_n$ for every $n \geq 2$. If we view such an operator as a quantum-computational gate, then topological braiding corresponds to a quantum circuit. A basic question is when such a representation affords universal quantum computation. In this work, we show how to classically simulate these circuits when the gate in question belongs to certain families of solutions to the Yang-Baxter equation. These include all of the qubit (i.e., $d = 2$) solutions, and some simple families that include solutions for arbitrary $d \geq 2$. Our main tool is a probabilistic classical algorithm for efficient simulation of a more general class of quantum circuits. This algorithm may be of use outside the present setting.
\end{abstract}

\section{Introduction}

The Yang-Baxter equation, named after C. N. Yang and R. J. Baxter, appears in a number of areas of mathematics and physics. Yang encountered the equation while working on two-dimensional quantum field theory, while Baxter applied it to exactly solvable models in statistical mechanics~\cite{Baez92}. An accessible review of some of the many applications of the Yang-Baxter equation can be found in~\cite{PerkYang06}. In this work, we will consider what is typically called the constant quantum Yang-Baxter equation, and is defined as follows. Let $V$ be a finite-dimensional complex Hilbert space and $R$ a linear operator on $V \otimes V$. Then $R$ satisfies the quantum Yang-Baxter equation (YBE) if
$$
(R \otimes \identity) (\identity \otimes R)(R \otimes \identity) = (\identity \otimes R)(R \otimes \identity)(\identity \otimes R)\,,
$$
where $\identity$ denotes the identity operator on $V$. In this case, we say that $R$ is a Yang-Baxter operator. The YBE bears a close resemblance to the relation 
$$
\sigma_i \sigma_{i+1} \sigma_i = \sigma_{i+1} \sigma_i \sigma_{i+1}
$$
of the braid group $B_n$. Indeed, a Yang-Baxter operator naturally gives the space $V^{\otimes n}$ the structure of a representation $\rho_{(R, n)}$ of $B_n$. Turaev showed that if $R$ also satisfies the so-called Markov property, then it corresponds to an invariant of links~\cite{Turaev88}. The invariant is given by the (appropriately scaled) trace of $\rho_{(R, n)}$, evaluated at any braid whose trace closure is equal to the link. More generally, one can derive a link invariant from the trace of any representation of $B_n$ which satisfies the Markov property. This is the case for the famous Jones representation and the corresponding Jones Polynomial invariant~\cite{Jones85}.
Freedman, Kitaev, Larsen and Wang~\cite{FKW02, FLW02a, FLW02b} showed that the Jones representation has significant meaning in quantum computation. Informally speaking, the Jones representation provides a functionality-preserving ``dictionary'' between quantum circuits and braids. One consequence of these results is that additively approximating the Jones Polynomial is a universal problem for quantum computation. It also appears that this dictionary could correspond to a physically plausible implementation of quantum computers by means of exotic particles called non-abelian anyons~\cite{Preskill04}. One downside of the Jones representation in this context is that topological locality of braiding does not translate naturally into tensor-product locality of the corresponding quantum circuit. In particular, it is not the case that braiding two adjacent strands correponds to applying a Yang-Baxter operator on the space of two adjacent qubits. One might hope that the Jones representation could be made to look this way, e.g., by changing bases or manipulating the multiplicities of its irreducible summands. However, Rowell and Wang recently showed that this is impossible unless the Jones representation in question\footnote{Recall that, just like the Jones polynomial, the Jones representation has a parameter (in addition to $n$) which is typically a root of unity. Quantum universality holds for most but not all values of this parameter.} is in fact \emph{not} quantum-universal (see Corollary 4.2 in~\cite{RowellWang12}.)

Alternatively, one may ask if there exist other representations of the braid groups with the desired local structure and which exhibit computational universality. This amounts to finding unitary solutions to the YBE and determining if they are universal gates. In this work, we investigate low-dimensional solutions with this motivation in mind. All of the qubit (i.e., $d = \dim V=2$) solutions to the YBE were found by Hietarinta~\cite {Hietarinta93}; the unitary ones among those were identified by Dye~\cite{Dye03}. It was previously known that, when their eigenvalues are roots of unity, these solutions yield braid group representations with finite image~\cite{FRW06, Franko10}. We show how to classically approximate the matrix entries of any quantum circuit constructed from a particular kind of two-qudit gate. Most of the qubit solutions to the YBE, as well as some solution families of arbitrary dimension, are special cases of this gate. For the remaining qubit solutions, we give a different result: how to classically simulate a quantum computation that begins in any product state, and ends with a measurement of an observable on logarithmically many qubits. This is typically considered sufficient to rule out quantum universality. However, some caution is called for: there are gate sets which are known to be classically simulable in this sense but become hard to simulate when one is allowed to measure all the output qubits in the computational basis~\cite{JozsaVanDenNest,bremner}.

We remark that, as pointed out by Lomonaco and Kauffman~\cite{KauffmanLomonaco04}, some qubit solutions to the YBE are entangling gates, and any entangling gate together with arbitrary single-qubit gates is universal~\cite{Brylinski02}. However, in that case we are no longer computing with representations of the braid group. Indeed, a primary motivation for the topological approach to quantum computation is to rely on the topological stability of braiding for fault-tolerance. Applying single-qubit gates fault-tolerantly as part of this approach would require additional ideas. For this reason, we restrict ourselves to just one gate, which acts on two qubits and is a solution to the YBE. Some classes of entangling gates that have previously been shown to be classically simulatable are given in~\cite{miyake, Gottesman}.

\section{Preliminaries}

\subsection{Gates, circuits, and universality}

We briefly review basic notions about quantum gates, circuits, and computational universality. For more details, we refer the reader to the text of Nielsen and Chuang~\cite{NielsenChuang00}. Given an integer $d \geq 1$, let $[d] = \{0, 1, \dots, d-1\}$. Let $V = \CC[d]$ be a $d$-dimensional complex Hilbert space with distinguished orthonormal basis $\{\ket i :i \in [d]\}$. We refer to copies of $[d]$ as dits and copies of $V$ as qudits. For any $k$ and any $x \in [d]^k$, set $\ket{x} = \ket{x_1} \otimes \ket{x_2} \otimes \cdots \otimes \ket{x_k}$. The space $V^{\otimes k}$ has a preferred basis $\{\ket{x} : x \in [d]^k\}$, which we will call the computational basis. A unitary operator on $V^{\otimes k}$ is called a $k$-qudit gate. 

Let $\mathcal R$ be a set of gates which act on $k$ or fewer qudits. Fix $n > 0$ and, for each $l$-qudit gate $R \in \mathcal R$, define $R_j \in \opnm U(V^{\otimes n})$ to be the operator that applies $R$ to qudits $j, \ldots, j+l$ and the identity operator $\identity$ to the rest. Define $\mathcal R^{(n)}$ to be the set of all $R_j$, for every $R \in \mathcal R$ and every valid index $j$. An $n$-qudit quantum circuit over the gate set $\mathcal R$ (or $\mathcal R$-circuit for short) is a finite sequence
$$
C = (U_1,  U_2, \ldots, U_m)
$$
where for each $i$, $U_i \in \mathcal R^{(n)}$ or $U_i^{-1} \in \mathcal R^{(n)}$. We will sometimes denote the number of gates in the circuit $C$ by $|C| = m$. The circuit defines an operator
$$
C = U_m \cdot U_{m-1} \cdots U_1 \in U(V^{\otimes n})\,.
$$
Note that we have overloaded notation so that $C$ refers to both the sequence of gates and the operator implemented by their composition. Pictorially, an $\mathcal R$-circuit is represented by a diagram like the following, where each wire corresponds to one qudit. 
$$
\Qcircuit @C=1em @R=.5em {
&\qw				&\gate{~U_2~}	&\qw			&\multigate{1}{U_4}	&\qw 			&\qw			&\qw\\
&\qw				&\qw 		&\qw			&\ghost{U_4}		&\multigate{2}{U_5} 	&\qw 		&\qw\\
&\multigate{1}{U_1}	&\qw 		&\qw			&\qw				&\ghost{U_5} 		&\gate{~U_6~}	&\qw\\
&\ghost{U_1}		&\qw			&\gate{~U_3~}	&\qw				&\ghost{U_5}		&\qw			&\qw
}
$$
For pictorial convenience, the gates shown in the figure only act on nearest neighbors. While the nearest-neighbor condition is needed for certain other types of circuits to be classically simulatable (e.g. matchgates~\cite{miyake}), our results do not require it. We adopt here the common convention that circuits are applied from left to right (unfortunately, the opposite of the case for operators.)
Of general interest are gate sets which allow for universal quantum computation. 

\begin{definition}
A gate set $\mathcal R$ is \textbf{universal} if there exists $N>0$ such that $N$-qudit $\mathcal R$-circuits form a dense subset of $U(V^{\otimes N})$. 
\end{definition}

\noindent The Solovay-Kitaev theorem~\cite{NielsenChuang00} tells us that, for universal $\mathcal R$, any unitary operator in $U(V^{\otimes N})$ can be approximated to precision $\epsilon$ with an $N$-qudit $\mathcal R$-circuit of length polylog$(1/\epsilon)$. Standard arguments also show that density can be extended from $N$ to any $n \geq N$.

Quantum-computational power can also be defined in terms of complexity classes. The class that is typically associated with efficient quantum computation is called BQP, which stands for bounded-error quantum polynomial time. A drawback of BQP is the lack of known complete problems, i.e., problems which are both in BQP and at least as hard (under classical polynomial-time reduction) as any other problem in BQP.  The classical analogue BPP (bounded-error probabilistic polynomial time) suffers from the same drawback. For this reason, we will work with promise versions of these two classes, i.e., PromiseBQP and PromiseBPP. We will not need the formal definitions of these classes (see, e.g.,~\cite{JanzingWocjan07}). For us it will suffice to refer to the following.

\begin{definition}
Given a set $\mathcal R$ of quantum gates, the problem $\mathcal I(\mathcal R)$ is defined as follows. Given an $n$-qudit $\mathcal R$-circuit $C$ and ditstrings $x$ and $y$, as well as a promise that either $\bra{x} C \ket{y} > 2/3$ or $\bra{x} C \ket{y}  < 1/3$, decide which is the case.
\end{definition}

\noindent We may define PromiseBQP as the class of problems which reduce to $\mathcal I(\mathcal R)$ for some universal set of quantum gates $\mathcal R$. Interestingly, there are gate sets $\mathcal R$ which are not universal in the density sense but for which $\mathcal I(\mathcal R)$ is nonetheless PromiseBQP-hard; an example is $\mathcal R = \{\text{Hadamard}, \text{Toffoli}\}$. This gate set is dense over the special orthogonal group, but since the matrix entries are all real, it cannot be dense over the unitary group.

Later on, we will show that when $\mathcal R$ consists of a single gate which belongs to certain solution families of the Yang-Baxter equation, then $\mathcal I(\mathcal R) \in $ PromiseBPP. This means that $\mathcal R$ is not quantum universal under either of the above definitions, unless the widely believed conjecture that quantum computation is more powerful than classical computation is false.

\subsection{Pauli group and Clifford group}

Recall that the single-qubit Pauli operators are defined by
$$
I =
\begin{pmatrix} 
1 & 0 \\
0 & 1 \\
\end{pmatrix}\,,
\qquad
X =
\begin{pmatrix} 
0 & 1 \\
1 & 0 \\
\end{pmatrix}\,,
\qquad
Y =
\begin{pmatrix} 
0 & -i \\
i & 0 \\
\end{pmatrix}\,,
\qquad
Z = 
\begin{pmatrix} 
1 & 0 \\
0 & -1 \\
\end{pmatrix}\,.
$$
Each Pauli operator is self-adjoint and unitary. In the $n$-qubit case, we set
$$
X_j = I^{\otimes j-1} \otimes X \otimes I^{\otimes n-j}
$$
and likewise for $Y_j$ and $Z_j$. We define the $n$-qubit Pauli group $\mathcal P_n$ to be the group generated by $\{X_j, Y_j, Z_j : j = 1, \dots, n \}$. An important property for us is that $\mathcal P_n$ spans the space of $n$-qubit Hermitian operators.

The Clifford group on $n$ qubits is defined to be the normalizer of the Pauli group inside the unitary group, i.e.,
$$
\mathcal C_n = \{ U \in U(2^n) : U P U^\dagger \in \mathcal P_n \text{ for all } P \in \mathcal P_n\}\,.
$$
By direct computation, it's easy to check that the following gates are elements of $\mathcal C_n$ for any $n \geq 2$:
$$
H = \frac{1}{\sqrt{2}}
\begin{pmatrix} 
1 & 1 \\
1 & -1 \\
\end{pmatrix}\,,
\qquad
P = 
\begin{pmatrix} 
1 & 0 \\
0 & i \\
\end{pmatrix}\,,
\qquad
CNOT = 
\begin{pmatrix} 
1 & 0 & 0 & 0 \\
0 & 1 & 0 & 0 \\
0 & 0 & 0 & 1 \\
0 & 0 & 1 & 0
\end{pmatrix}\,.
$$
It is a theorem (see \cite{Gottesman}) that the above gates, when applied to arbitrary qubits or pairs of qubits, actually generate $\mathcal C_n$. We will thus call any circuit made up of these gates a Clifford circuit. Since $\mathcal P_n \subset \mathcal C_n$, we can also add the Pauli operators to this gate set for free. We remark that the conjugation action of a Clifford circuit on an element of $\mathcal P_n$ is easy to compute in a direct, gate-by-gate fashion. For details, see~\cite{Gottesman}.

Due to the frequent appearance of $\mathcal C_n$ in various areas of quantum information, the computational power of Clifford circuits is well-studied. While $\mathcal C_n$ is finite and not universal, adding any gate outside $\mathcal C_n$ results in a universal set~\cite{Rains}. A thorough analysis of the computational power of Clifford circuits under various models is performed in~\cite{JozsaVanDenNest}.

\subsection{Yang-Baxter operators and representations of the braid group}

Let $V = \CC[d]$ and $R \in \opnm{U}(V \otimes V)$. Then $R$ satisfies the quantum Yang-Baxter equation (YBE) if
\begin{equation}\label{eq:QYBE}
(R \otimes \identity) (\identity \otimes R)(R \otimes \identity) = (\identity \otimes R)(R \otimes \identity)(\identity \otimes R)~,
\end{equation}
where $\identity$ denotes the identity operator on $V$. In this case, we say that $R$ is a Yang-Baxter operator. Let $T: \ket{a \otimes b} \mapsto \ket{b \otimes a}$ denote the swap operator on $V \otimes V$. By comparing circuit diagrams, it's not hard to see that $R$ is a solution to \eqref{eq:QYBE} if and only if $S = RT$ is a solution to
\begin{equation}\label{eq:AYBE}
S_{12} S_{13} S_{23} = S_{23} S_{13} S_{12}\,,
\end{equation}
where
$$
S_{12} = S \otimes \identity\,,
\qquad
S_{13} = (\identity \otimes T) (S \otimes \identity) (\identity \otimes T)\,,
\qquad
S_{23} = \identity \otimes S\,.
$$
Equation \eqref{eq:AYBE} is sometimes called the algebraic Yang-Baxter equation.

Recall that the braid group $B_n$ is a finitely generated group with generators $\sigma_1, \sigma_2, \cdots,\sigma_{n-1}$ and relations
$$
\begin{array}{rcll}
\sigma_i \sigma_j & = & \sigma_j \sigma_i & \forall \ |i-j| \geq 2\\
\sigma_i \sigma_{i+1} \sigma_i & = & \sigma_{i+1} \sigma_i \sigma_{i+1}
& \forall \ i.
\end{array}
$$
In 1925 Artin proved that the abstract group defined above precisely captures the topological equivalence of braided strings~\cite{Artin25}. Pictorially, braids are represented with a diagram; an example diagram for $\sigma_3^{-1}\sigma_2^{-1}\sigma_3\sigma_1^{-1}$ is shown below. We read such diagrams left-to-right, keeping the same convention as with circuits. The second generating relation of $B_n$ is known as the Yang-Baxter relation. A solution $R \in \opnm U(V \otimes V)$ of the Yang-Baxter equation yields a unitary representation $\rho_{(R, n)}$ of $B_n$ on the space $V^{\otimes n}$ for every $n$. It is defined by
$$
\rho_{(R, n)}(\sigma_i) = \identity^{\otimes (i-1)} \otimes R \otimes \identity^{\otimes{n-i-1}}\,.
$$
The images of braids under $\rho_{(R, n)}$ are precisely the $R$-circuits on $n$ qudits, where $d = \dim V$. For example, the braid $\sigma_3^{-1}\sigma_2^{-1}\sigma_3\sigma_1^{-1}$ and the corresponding $R$-circuit are shown below.
$$
\begin{tabular}{ l c r }
      \begin{tikzpicture}
      \braid[rotate=90, line width=1pt, number of strands=4, border height=6pt] (braid) at (0,0) a_1 a_2 a_1^{-1} a_3;
      \node[at=(braid-1-s),anchor=north, left=4pt] {};
      \node[at=(braid-2-s),anchor=north, left=4pt] {};
      \node[at=(braid-3-s),anchor=north, left=4pt] {};
      \node[at=(braid-4-s),anchor=north, left=4pt] {};
      \node[at=(braid-1-e),anchor=north, left=-16pt] {};
      \node[at=(braid-2-e),anchor=north, left=-16pt] {};
      \node[at=(braid-3-e),anchor=north, left=-16pt] {};
      \node[at=(braid-4-e),anchor=north, left=-16pt] {};
    \end{tikzpicture} 
& 
\raisebox{44pt}{$\mapsto$}
& 
\raisebox{70pt}{
\Qcircuit @C=1em @R=.5em {
& \qw	& \qw	& \qw	& \multigate{1}{R^{-1}}	& \qw\\
& \qw	& \multigate{1}{R^{-1}} 	& \qw	& \ghost{R^{-1}}	& \qw\\
& \multigate{1}{R^{-1}}          & \ghost{R^{-1}} 		& \multigate{1}{~R~}	& \qw	& \qw\\
& \ghost{R^{-1}}		& \qw	& \ghost{~R~}		& \qw	& \qw
}}
\end{tabular}
$$
Under a plausible physical interpretation, a computation is performed by braiding particle-like excitations whose exchange statistics are described by $R$. If $R$ is a universal gate, this model would result in universal topological computation. Such a model could provide a basis for a quantum computer architecture with inherent fault-tolerance~\cite{Preskill04}.

\section{Classical simulation of certain quantum circuits}

In this section, we prove a general result about simulating certain quantum circuits with a classical probabilistic algorithm. We begin with two straightforward lemmas about classical sampling. (See \ref{sec:app} for proofs).

\begin{lemma}
Let $\{P_j\}_{j=1}^n$ be probability distributions on $[d]$ and let $P = \Pi_j P_j$ be the corresponding product distribution over $[d]^n$. Suppose that we can calculate $P_j(k)$ for every $j$ and every $k$ in total time $\emph{poly(n, d)}$. Then there's a classical probabilistic algorithm that runs in time $\emph{poly(n, d)}$ and samples from $[d]^n$ according to a probability distribution $D$ such that $|P - D| \leq 1/2^{\emph{poly}(n)}$.
\end{lemma}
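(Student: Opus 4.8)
The plan is to generate the $n$ coordinates independently by the standard inverse-CDF sampling trick, carried out with a sufficiently fine binary discretization of the unit interval. For each coordinate $j$ I would first compute the cumulative values $c_j(k) = \sum_{i=0}^{k} P_j(i)$ for $k \in [d]$ (with the convention $c_j(-1) = 0$); by hypothesis every $P_j(k)$ is computable in total time $\mathrm{poly}(n,d)$, hence so is every $c_j(k)$. I would then draw $b$ fair coin flips to obtain a number $\tilde u$ chosen uniformly from the grid $\{0, 2^{-b}, 2\cdot 2^{-b}, \dots, 1-2^{-b}\}$, and output the unique $k$ with $c_j(k-1) \le \tilde u < c_j(k)$. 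Concatenating the $n$ outputs yields a sample in $[d]^n$. The algorithm uses $nb$ random bits and runs in time $\mathrm{poly}(n,d,b)$.

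For the error bound, fix $j$ and note that the number of grid points lying in the half-open interval $[c_j(k-1), c_j(k))$, whose length is $P_j(k)$, differs from $2^b P_j(k)$ by less than $1$; hence the probability $D_j(k)$ that coordinate $j$ equals $k$ differs from $P_j(k)$ by less than $2^{-b}$. Summing over the $d$ outcomes gives $|P_j - D_j| \le d\,2^{-b}$. Since $P = \prod_j P_j$ and the produced distribution $D = \prod_j D_j$ are both product distributions, statistical distance is subadditive across the factors (the usual hybrid argument), so $|P - D| \le \sum_{j=1}^n |P_j - D_j| \le n d\, 2^{-b}$. Choosing $b = n^2 + \lceil \log_2(nd)\rceil$ makes this at most $2^{-n^2}$ while keeping $b$ polynomial in $n$ and $\log d$; the running time $\mathrm{poly}(n,d,b)$ is then $\mathrm{poly}(n,d)$, as claimed.

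There is no serious obstacle here — the lemma is a warm-up for the main simulation argument. The one point deserving care is bookkeeping of finite precision: the quantities $P_j(k)$ and the partial sums $c_j(k)$ are themselves only available to finitely many bits, but truncating each to $\mathrm{poly}(n)$ bits adds at most a further $2^{-\mathrm{poly}(n)}$ to each per-coordinate error, which is absorbed into the same estimate. One should also ensure the truncated probabilities are nonnegative and sum to exactly $1$ so that the intervals tile $[0,1)$ — e.g., by assigning any residual mass to the last outcome — after which the above analysis goes through verbatim.
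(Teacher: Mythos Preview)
Your proposal is correct and follows essentially the same approach as the paper: sample each coordinate independently via inverse-CDF with $b$ (the paper's $m$) fair coin flips, bound the per-outcome error by $2^{-b}$, and combine across the $n$ factors. The only noteworthy difference is that you combine errors via subadditivity of total variation for product measures, yielding $|P-D|\le nd\,2^{-b}$, whereas the paper bounds the pointwise error $|P(y)-D(y)|\le n/2^m$ and then sums over all $d^n$ strings to get $|P-D|\le nd^n/2^m$; your estimate is tighter but both suffice for a polynomial number of random bits.
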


\noindent We will also require the following Chernoff-Hoeffding bound for complex-valued random variables.

\begin{lemma}
Let $X_1, X_2, \ldots, X_n$ be independent complex-valued random variables with $\EE[X_j] = \mu$ and $|X_j| \leq b$ for all $j$. Let $S = \sum_j X_j/n$. Then
$$
\emph{Pr}\left[ \left|S - \mu \right| \geq \epsilon \right] \leq 4 \exp\left(-n\epsilon^2/8b^2\right)\,.
$$
\end{lemma}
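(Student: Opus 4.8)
The plan is to reduce this complex-valued bound to the ordinary real-valued Hoeffding inequality by treating each $X_j$ as a pair of real random variables. Write $X_j = A_j + i B_j$ with $A_j = \re(X_j)$ and $B_j = \im(X_j)$, and $\mu = \alpha + i\beta$ with $\alpha = \re(\mu)$, $\beta = \im(\mu)$. Independence of the $X_j$ passes to the families $\{A_j\}_j$ and $\{B_j\}_j$; linearity of expectation gives $\EE[A_j] = \alpha$ and $\EE[B_j] = \beta$; and $|X_j| \le b$ forces $|A_j| \le b$ and $|B_j| \le b$, so each $A_j$ and each $B_j$ takes values in an interval of width $2b$. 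Setting $S_A = \frac{1}{n}\sum_j A_j$ and $S_B = \frac{1}{n}\sum_j B_j$, we have $S = S_A + i S_B$.

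Next I would decompose the deviation event along the two coordinate axes. Since $|S - \mu| \le |S_A - \alpha| + |S_B - \beta|$, the event $|S - \mu| \ge \epsilon$ implies $|S_A - \alpha| \ge \epsilon/2$ or $|S_B - \beta| \ge \epsilon/2$, so a union bound gives
$$
\Pr\big[\,|S - \mu| \ge \epsilon\,\big] \;\le\; \Pr\big[\,|S_A - \alpha| \ge \tfrac{\epsilon}{2}\,\big] + \Pr\big[\,|S_B - \beta| \ge \tfrac{\epsilon}{2}\,\big].
$$
To each term I would apply the standard (real-valued) Hoeffding inequality: for independent real random variables valued in intervals of width $2b$, it yields $\Pr[\,|S_A - \alpha| \ge t\,] \le 2\exp\!\big(-2n^2 t^2 / (n(2b)^2)\big) = 2\exp\!\big(-nt^2/(2b^2)\big)$, and similarly for $S_B$. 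Setting $t = \epsilon/2$ gives $2\exp\!\big(-n\epsilon^2/(8b^2)\big)$ for each of the two terms, and adding them produces exactly $4\exp\!\big(-n\epsilon^2/(8b^2)\big)$.

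There is no genuine obstacle here — the argument is routine. The only points needing care are the constants: the interval width is $2b$ rather than $b$, and splitting $\epsilon$ as $\epsilon/2 + \epsilon/2$ is what lands the exponent at $8b^2$. One could instead split via the Pythagorean identity $|S - \mu|^2 = |S_A - \alpha|^2 + |S_B - \beta|^2$ with threshold $\epsilon/\sqrt{2}$, which would sharpen the exponent to $n\epsilon^2/(4b^2)$; the stated form is the weaker but tidier version, and is more than sufficient for the sampling arguments of the next section. It is also worth stating explicitly that a ``complex-valued random variable'' here means a measurable map into $\CC$, identified with $\RR^2$, so that independence, boundedness, and linearity of expectation all restrict to the real and imaginary parts as used above.
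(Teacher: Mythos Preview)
Your proof is correct and follows essentially the same approach as the paper: split into real and imaginary parts, use the triangle inequality to reduce $|S-\mu|\ge\epsilon$ to a union of the two coordinate events at threshold $\epsilon/2$, and apply the real-valued Hoeffding bound on intervals of width $2b$. Your additional remarks about the Pythagorean split and the care with constants are accurate but go beyond what the paper records.
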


Let $\mathcal S_d$ denote the symmetric group, i.e., the group of permutations of $d$ letters. We denote the action of $\pi \in \mathcal S_d$ on an integer $1 \leq j \leq d$ by $\pi j$. 

\begin{definition} Let $Q$ be an invertible $d \times d$ matrix over $\CC$, and $G$ a subgroup of $\mathcal S_d$. Define matrices $A, B$ by setting $A_{ij} = |Q_{ij}|$ and $B_{ij} = |(Q^{-1})_{ij}|$. We say that $Q$ satisfies property $(G)$ if for every $\pi \in G$ and every $k, l$, we have $\sum_j A_{k, \pi j}B_{jl} \leq 1$.
\end{definition}
\noindent If $Q$ is unitary, then by Cauchy-Schwarz and the orthonormality of the rows of $Q$,
$$
\sum_j A_{k, \pi j}B_{jl} \leq \left(\sum_j |A_{k, \pi j}|^2 \sum_i |B_{il}|^2\right)^{1/2} = 1.
$$
It follows that unitary matrices satisfy property $(\mathcal S_d)$. 

We are now ready to present the main classical simulation algorithm. When we refer to the matrix entries of operators in $\opnm{GL}(\CC[d]) \cong \opnm{GL}_d(\CC)$, it will always be in the computational basis. We say that such an operator is computable if its entries can be computed exactly by a classical algorithm in poly$(d)$ time. Recall that $T:a \otimes b \mapsto b \otimes a$ is the swap operator, and that for a subset $S$ of a group $G$, $\langle S \rangle$ denotes the subgroup of $G$ generated by $S$.

\begin{theorem}\label{thm:main}
Let $\mathcal R = \{R_1, R_2, \ldots, R_k\}$ be a set of unitary 2-qu$d$it gates, each one a composition
\begin{equation}\label{eq:general-gate-form}
R_i = (Q \otimes Q) D_i P_i (C_i \otimes C_i) (Q \otimes Q)^{-1}
\end{equation}
of computable, invertible operators. Suppose that for each $i$, $D_i$ is a diagonal unitary, $C_i$ is a $d \times d$ permutation matrix, and $P_i = \identity$ or $P_i = T$. Finally, let $Q$ satisfy property $(G)$ where $G = \langle \{C_i\}_{i=1}^k \rangle \leq \mathcal S_d$. Then there exists a classical probabilistic algorithm which, given an $n$-qudit $\mathcal R$-circuit $U$ and strings $x, z \in [d]^n$ and $\epsilon > 0$, outputs a number $r$ in time $\emph{poly}(n, |U|, 1/\epsilon)$ such that $|r - \langle x | U | z \rangle| < \epsilon$ except with probability exponentially small in $n$ and $1/\epsilon$.
\end{theorem}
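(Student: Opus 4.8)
The plan is to conjugate the whole circuit by $Q^{\otimes n}$ so as to reduce it to a circuit built from \emph{monomial} unitaries (matrices with exactly one nonzero entry per row and per column, each of unit modulus), and then estimate the resulting matrix entry by importance sampling, choosing the sampling distribution precisely so that property $(G)$ forces the estimator to be bounded by $1$.

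First I would observe that for an operator acting on two consecutive qudits $j,j{+}1$ the block-identity padding can be written as $QQ^{-1}$ in each slot, so that $I^{\otimes(j-1)}\otimes (Q\otimes Q)\,N\,(Q\otimes Q)^{-1}\otimes I^{\otimes(n-j-1)} = Q^{\otimes n}\big(I^{\otimes(j-1)}\otimes N\otimes I^{\otimes(n-j-1)}\big)(Q^{\otimes n})^{-1}$. Applying this to each gate $(R_i)_j^{\pm1}$ of the circuit $U$ and telescoping gives $U = Q^{\otimes n}\,\tilde U\,(Q^{\otimes n})^{-1}$, where $\tilde U$ is the circuit built from $\tilde R_i := D_iP_i(C_i\otimes C_i)$ and $\widetilde{R_i^{-1}} = (C_i^{-1}\otimes C_i^{-1})P_iD_i^{-1}$. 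Each of these is a product of a diagonal unitary and permutation matrices, hence monomial with unit-modulus entries, and monomiality is preserved under tensoring with identities and under composition. Thus $\tilde U|v\rangle = \gamma(v)\,|\phi(v)\rangle$ for a bijection $\phi$ of $[d]^n$ and a phase $\gamma(v)$. Because each $\tilde R_i$ acts on its pair of qudits either as ``apply $C_i$ to both values'' or ``swap the two values and apply $C_i$ to both'', the bijection has the wreath-product form $\phi(v)_m = \pi_m\big(v_{\sigma^{-1}(m)}\big)$ for a position permutation $\sigma\in\mathcal S_n$ and value permutations $\pi_m\in\langle\{C_i^{\pm1}\}\rangle = G$; the data $\sigma$, the $\pi_m$, and (given $v$) the phase $\gamma(v)$ are all computable in $\mathrm{poly}(n,|U|,d)$ time by propagating the basis state through the monomial circuit.

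Next, expanding $\langle x|U|z\rangle = \langle x|Q^{\otimes n}\tilde U(Q^{-1})^{\otimes n}|z\rangle$ in the computational basis and using monomiality to collapse the inner sum, then reindexing $m\mapsto\sigma(m)$ (writing $y_m := x_{\sigma(m)}$), I obtain
$$
\langle x|U|z\rangle = \sum_{v\in[d]^n}\gamma(v)\ \prod_{m=1}^n Q_{y_m,\,\pi_m v_m}\,(Q^{-1})_{v_m,\,z_m}.
$$
To estimate this sum of $d^n$ terms, draw $v$ from the product distribution $\mathcal D(v) = \prod_m p_m(v_m)$ with $p_m(k) := A_{y_m,\pi_m k}\,B_{k,z_m}/Z_m$, where $Z_m := \sum_k A_{y_m,\pi_m k}\,B_{k,z_m}$ (if some $Z_m=0$ the whole sum vanishes and we output $0$), and use the unbiased estimator $\hat X(v) = \gamma(v)\prod_m Q_{y_m,\pi_m v_m}(Q^{-1})_{v_m,z_m}/p_m(v_m)$. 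A one-line computation gives $|\hat X(v)| = \prod_m Z_m$, and since each $\pi_m\in G$, property $(G)$ yields $Z_m\le 1$, so $|\hat X(v)|\le 1$. Averaging $N$ independent copies and invoking the complex Chernoff--Hoeffding bound (Lemma~2) with $b=1$, a choice $N = O\big((n+1/\epsilon)/\epsilon^2\big)$ makes the average $r$ satisfy $|r-\langle x|U|z\rangle|<\epsilon$ except with probability exponentially small in $n$ and $1/\epsilon$. Sampling from $\mathcal D$ (to within exponentially small total variation, which perturbs the estimate negligibly because $|\hat X|\le1$) is done via Lemma~1, using that each $p_m$ is computable in $\mathrm{poly}(d)$ time; the whole algorithm runs in $\mathrm{poly}(n,|U|,1/\epsilon)$ time.

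The only real content is the third step: taking $p_m$ proportional to $|Q_{y_m,\pi_m v_m}|\cdot|(Q^{-1})_{v_m,z_m}|$ is exactly what makes $|\hat X(v)|$ equal to $\prod_m Z_m$, and property $(G)$ is precisely the hypothesis that bounds this product by $1$, i.e.\ that the natural estimator has constant modulus and hence $\mathrm{poly}$ sample complexity. Everything else --- the conjugation identity, unbiasedness, the two sampling lemmas, and the routine technicalities (gates applied in reverse, the degenerate case $Z_m=0$, and the regime $\epsilon\le 2^{-\Omega(n)}$ where $d^n = \mathrm{poly}(1/\epsilon)$ and one simply sums all terms directly) --- is straightforward. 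So the main obstacle, already handled by the setup, is arranging the estimator to be bounded by a constant; without property $(G)$ its variance could be exponentially large.
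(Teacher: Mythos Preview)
Your proposal is correct and follows essentially the same route as the paper: conjugate by $Q^{\otimes n}$ to reduce to a monomial inner circuit, write its action in the wreath-product form (position permutation times per-wire value permutations in $G$, times a phase), expand the matrix element as a product-form sum, importance-sample with weights proportional to $|Q_{\cdot\cdot}|\cdot|(Q^{-1})_{\cdot\cdot}|$, observe that the estimator has modulus $\prod_m Z_m\le 1$ by property $(G)$, and finish with Lemmas~1 and~2. You are slightly more explicit than the paper about inverse gates, the degenerate case $Z_m=0$, and the tiny-$\epsilon$ regime, but the argument is the same.
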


\begin{proof}
Set $S_i = D_iP_i (C_i \otimes C_i)$. If we expand each $R_i$-gate to turn $U$ into a circuit made from $S_i$-gates and $Q$-gates, then all of the $Q$-gates except the initial and final ones are cancelled, as in the example below. We are thus left with a circuit of the form $Q^{\otimes n} V (Q^{-1})^{\otimes n}$ where $V$ is an $\{S_i\}$-circuit. We remark that, in this expanded form, the entire circuit is not necessarily a proper quantum circuit, since $Q$ might not be unitary. The circuit $V$, on the other hand, is quantum since all of its gates are unitary.
\begin{equation}\label{eq:pull-q-out}
\begin{tabular}{ l c r }
\Qcircuit @C=1em @R=.5em {
&\qw		&\qw 			&\qw			&\gate{\,~Q\,~}	&\multigate{1}{S_2} 	&\gate{Q^{-1}}	&\qw\\
&\gate{\,~Q\,~} &\multigate{1}{S_1}	&\gate{Q^{-1}}	&\gate{\,~Q\,~}	&\ghost{S_2} 		&\gate{Q^{-1}} &\qw\\
&\gate{\,~Q\,~}	&\ghost{S_1}		&\gate{Q^{-1}}	&\qw		&\qw 			&\qw 		&\qw
}& \raisebox{-28pt}{$\mapsto$} & 
\Qcircuit @C=1em @R=.5em {
&\gate{\,~Q\,~}	&\qw 			&\multigate{1}{S_2} 	&\gate{Q^{-1}}	&\qw\\
&\gate{\,~Q\,~} &\multigate{1}{S_1}	&\ghost{S_2} 		&\gate{Q^{-1}} &\qw\\
&\gate{\,~Q\,~}	&\ghost{S_1}		&\qw 			&\gate{Q^{-1}} &\qw
}
\end{tabular}
\end{equation}
Before we proceed, note that a non-nearest-neighbor gate can be written as a nearest-neighbor gate conjugated with a swap gate. We depict our gates as acting on nearest neighbors for convenience only, but this condition is not needed for the result to hold. The action of an $S_i$-gate on the $j$-th and $(j+1)$-st qudits of a computational basis state is simple to compute. The values of the two qudits are both in $[d]$ initially, and remain in $[d]$ after the action of $C_i$. Second, these new values are either swapped or left unchanged by $P_i$. Third, the $D_i$-gate adds an overall phase factor to the state. By composing these easily-computable actions, the action of $V$ on a computational basis state can be computed in time polynomial in $n$, $d$, and $|V|$. Up to phases, this action consists of permuting the $n$ qudits by some $\pi \in S_n$, and applying some bijection $f_j:[d] \rightarrow [d]$ to the initial value of the $\pi(j)$-th qudit. Each $f_j$ is a composition of $C_i$-gates, in the order specified by $V$. Explicitly, for a basis state $\ket y = \ket{ y_1 y_2 \ldots y_n}$, we write
$$
  V \ket y = e^{i\phi(y)}\ket{f_1 y_{\pi 1} \otimes f_2 y_{\pi 2} \otimes \cdots \otimes f_n y_{\pi n}}\,,
$$
where $\phi(y)$ is the overall phase resulting from the $D_i$-gates. For simplicity of notation, we denoted the image of $k$ under the permutation $\pi$ as $\pi k$, and wrote $f_j y_{\pi j}$ in place of $f_j(y_{\pi j})$. 

Next we consider the matrix element
\begin{align*}
  \bra x U \ket z = \bra{x}(Q)^{\otimes n}V(Q^{-1})^{\otimes n}\ket{z}
 	&= \sum_{y\in [d]^n} \bra{x} (Q)^{\otimes n} V \ket{y}\bra{y} (Q^{-1})^{\otimes n} \ket{z}\\ 
	&=\sum_{y\in [d]^n}e^{i\phi(y)}\prod_{j=1}^n \bra{x_j} Q \ket{f_j y_{\pi j}} \bra{y_j}Q^{-1}\ket{z_j}\,.
\end{align*}
We expand the matrix elements of $Q$ and $Q^{-1}$ in terms of magnitudes and phases:
\begin{align*}
  \bra{r}Q\ket{s} &= A(r,s)e^{i\alpha(r,s)}\\
  \bra{r}Q^{-1}\ket{s} &= B(r,s)e^{i\beta(r,s)}
\end{align*}
where $A,B,\alpha,\beta$ are real-valued and $r,s\in [d]$. Then
\begin{align*}
  \bra{x}(Q)^{\otimes n}V(Q^{-1})^{\otimes n}\ket{z} 
	&= \sum_{y\in [d]^n}e^{i\theta(y)}\prod_{j=1}^n A(x_j, f_j y_{\pi j}) B(y_j, z_j)\\
	&= \sum_{y\in [d]^n}e^{i\theta(y)}\prod_{j=1}^n A(x_{\sigma j}, f_{\sigma j} y_j) B(y_j, z_j)\,,
\end{align*}
where $\sigma = \pi^{-1}$ and 
$$
\theta(y) = \phi(y)+\displaystyle\sum_{j=1}^n\bigl(\alpha(x_{\sigma j}, f_{\sigma j}y_j) + \beta(y_j, z_j)\bigr)\,.
$$

Now we introduce the following normalization factor:
$$
\rho = \sum_{y \in [d]^n} \prod_{j=1}^n A(x_{\sigma j}, f_{\sigma j} y_j) B(y_j, z_j)
	= \prod_{j=1}^n \sum_{k \in [d]} A(x_{\sigma j}, f_{\sigma j} k) B(k, z_j)\,.
$$
This allows us to define a natural probability distribution over $[d]^n$ by
$$
P(y) = \frac{1}{\rho}\prod_{j=1}^n A(x_{\sigma j}, f_{\sigma j} y_j) B(y_j, z_j)\,,
$$
which factorizes as $P(y) = \prod_{j=1}^n P_j(y_j)$, where
$$
P_j(l) = \frac{A(x_{\sigma j}, f_{\sigma j} l) B(l, z_j)}{\sum_{k \in [d]} A(x_{\sigma j}, f_{\sigma j} k) B(k, z_j)}\,.
$$
Note that $\rho$ and all of the $P_j(l)$ can be computed in time linear in $n$ and $d$. By Lemma \ref{lem:sampling}, we can efficiently sample from $[d]^n$ according to $P$, with error exponentially small in $n$. 

In order to estimate $\bra{x}U\ket{z}$, sample repeatedly from this distribution, obtaining outcomes $\xi(j) \in [d]^n$ for $j \in \{1, 2, \ldots\}$ and output the average of the random variables $X_j := \rho \exp(i\theta(\xi(j)))$. Observe that, for each $j$,
$$
\EE[X_j] = \sum_{z \in [d]^n} \rho e^{i \theta(z)} P(z) = \bra{x}U \ket{z}\,.
$$
To control the absolute value, recall that $f_{\sigma j}$ is a composition of the permutation matrices $C_i$, and is thus an element of $\langle \{C_i\}_{i=1}^k\rangle \leq \mathcal S_d$. Since $Q$ satisfies property $(\langle \{C_i\}_{i=1}^k\rangle)$, we have
$$
|X_j|^2 = |\rho|^2 
	= \prod_{j=1}^n \Bigl| \sum_{k \in [d]} A(x_{\sigma j}, f_{\sigma j} k) B(k, z_j) \Bigr|^2
	\leq \prod_{j=1}^n 1^2 \leq 1.
$$
by Cauchy-Schwarz, for each $j$. Now set $S(r) = \sum_{j=1}^r X_j / r$. By Lemma \ref{lem:chernoff}, for $r \geq 8n/\epsilon^3$ we have
$$
\text{Pr}\left[\left|S(r) - \bra{x} U \ket{z} \right| \geq \epsilon \right] \leq 4 \exp(-r\epsilon^2/8) \leq 4 \exp(-n/\epsilon)\,.
$$
\end{proof}

An immediate corollary is that, for $\mathcal R$ as in the theorem, $\mathcal I (\mathcal R)$  is in PromiseBPP. We will also need the following simple result about simulating circuits constructed from conjugated Clifford gates.

\begin{theorem}\label{thm:clifford}
Let $S \in \mathcal C_2$, and $R = (Q \otimes Q) S (Q \otimes Q)^\dagger$ where $Q$ is a single-qubit gate. Let $U$ be a $\{R\}$-circuit on $n$ qubits, $M$ a Hermitian operator on $O(\log(n))$ qubits, and $\ket{\psi}, \ket{\phi}$ arbitrary $n$-qubit product states. Then $\bra{\psi} U^\dagger (M \otimes I) U \ket{\phi}$ can be computed exactly in $O(\text{poly}(n))$ classical time.
\end{theorem}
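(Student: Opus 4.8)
Here is my plan for proving Theorem~\ref{thm:clifford}.

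\medskip

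The key idea is to conjugate the whole problem by $(Q^{-1})^{\otimes n}$ so that the non-Clifford gate $Q$ gets pushed to the ends of the circuit, exactly as in the proof of Theorem~\ref{thm:main}. Writing each gate $R = (Q\otimes Q) S (Q\otimes Q)^\dagger$ with $S \in \mathcal C_2$, the circuit $U$ expands to a product of $S$-gates with $Q$'s and $Q^\dagger$'s interleaved; all internal $Q$-factors cancel, leaving $U = Q^{\otimes n}\, W\, (Q^\dagger)^{\otimes n}$, where $W$ is a genuine Clifford circuit on $n$ qubits. Therefore
\[
\bra{\psi} U^\dagger (M\otimes I) U \ket{\phi}
= \bra{\psi} (Q^\dagger)^{\otimes n} W^\dagger \bigl( (Q^{\otimes n})^\dagger (M\otimes I) Q^{\otimes n} \bigr) W (Q^\dagger)^{\otimes n} \ket{\phi}.
\]
Now absorb the outermost $Q$-layers into the product states: set $\ket{\psi'} = (Q^\dagger)^{\otimes n}\ket{\psi}$ and $\ket{\phi'} = (Q^\dagger)^{\otimes n}\ket{\phi}$, which are still product states since $Q$ is a single-qubit gate. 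Also let $M' = (Q^{\otimes n})^\dagger (M\otimes I) Q^{\otimes n}$; since $M$ acts on only $O(\log n)$ qubits and conjugation by $Q$ is qubit-local, $M'$ is a Hermitian operator still supported on those same $O(\log n)$ qubits. So it suffices to compute $\bra{\psi'} W^\dagger (M' \otimes I) W \ket{\phi'}$ for a Clifford circuit $W$, product states $\ket{\psi'},\ket{\phi'}$, and a Hermitian observable $M'$ on $O(\log n)$ qubits.

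\medskip

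To evaluate this last quantity classically, I would expand $M'$ in the Pauli basis: since $\mathcal P_{O(\log n)}$ spans the Hermitian operators on $O(\log n)$ qubits, we can write $M' \otimes I = \sum_{P} c_P\, P$ where each $P$ ranges over an $n$-qubit Pauli supported on the $O(\log n)$ qubits and there are only $4^{O(\log n)} = \mathrm{poly}(n)$ terms, with each coefficient $c_P = 2^{-O(\log n)}\,\tr((M'\otimes I)P)$ computable in polynomial time. By linearity it then suffices to compute each term $\bra{\psi'} W^\dagger P\, W \ket{\phi'}$. Here the standard Clifford machinery applies: $W^\dagger P W$ is again an element of $\mathcal P_n$ (up to a $\pm 1$ or $\pm i$ phase), and it can be computed gate-by-gate in polynomial time by tracking how each Clifford generator conjugates Pauli operators, as recalled in the preliminaries. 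Thus the problem reduces to computing $\bra{\psi'} P' \ket{\phi'}$ for a single $n$-qubit Pauli operator $P'$ and product states $\ket{\psi'} = \bigotimes_j \ket{\psi'_j}$, $\ket{\phi'} = \bigotimes_j \ket{\phi'_j}$. But $P'$ is itself a tensor product $\omega \bigotimes_j P'_j$ of single-qubit Paulis (times a global phase $\omega$), so this inner product factorizes as $\omega \prod_{j=1}^n \bra{\psi'_j} P'_j \ket{\phi'_j}$, a product of $n$ scalars each computable in constant time. Summing $c_{P}$ times these contributions over the polynomially many Paulis $P$ gives the answer exactly in $O(\mathrm{poly}(n))$ time.

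\medskip

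The main thing to be careful about — rather than a deep obstacle — is bookkeeping about locality and support: one must check that expanding $U$ and conjugating leaves $W$ a bona fide polynomial-size Clifford circuit (the number of gates only grows by a constant factor, and non-nearest-neighbor gates are handled by conjugating with swaps, which are Clifford), and that $M'$ genuinely remains supported on $O(\log n)$ qubits so that its Pauli expansion has only polynomially many terms. One should also note that $W^\dagger P W$ may pick up a global phase in $\{\pm1,\pm i\}$, which must be tracked but is harmless. Everything else is the routine Clifford-tableau computation together with the observation that inner products between product states through a product of single-qubit Paulis factorize; no approximation is needed, so the output is exact.
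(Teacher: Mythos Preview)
Your proposal is correct and follows essentially the same approach as the paper: factor $U = Q^{\otimes n} W (Q^\dagger)^{\otimes n}$ with $W$ Clifford, absorb the $Q$-layers into the product states and the observable, expand the resulting $O(\log n)$-qubit observable in the Pauli basis (polynomially many terms), conjugate each Pauli by the Clifford circuit gate-by-gate, and evaluate the resulting tensor-product inner products. The only cosmetic difference is that the paper leaves the $Q$'s on the states explicit in the final formula rather than renaming to $\ket{\psi'},\ket{\phi'}$; note also a small slip in your displayed equation (the leftmost factor should be $Q^{\otimes n}$, not $(Q^\dagger)^{\otimes n}$), though your subsequent definitions of $\ket{\psi'}$ and $M'$ are consistent with the correct expression.
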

\begin{proof}
We first apply the procedure from \eqref{eq:pull-q-out} as before, and write
$$
U = Q^{\otimes n} V (Q^\dagger)^{\otimes n}
$$
where $V$ is described by a circuit consisting only of $S$ gates. The unitary operator implemented by $V$ is an element of $\mathcal C_n$. Now let $M$ be a Hermitian operator on $m = c \log (n)$ qubits, and suppose for simplicity that it acts only on the first $m$ qubits. Let $I$ denote the identity operator on the $(m+1)$st through $n$th qubits. We write
\begin{align*}
\bra{\psi} U^\dagger (M \otimes I) U \ket{\phi} 
&= \bra{\psi} Q^{\otimes n} V^\dagger Q^{\dagger \otimes n} 
(M \otimes I) Q^{\otimes n} V Q^{\dagger \otimes n} \ket{\phi} \\
&= \bra{\psi} Q^{\otimes n} V^\dagger (M' \otimes I) V Q^{\dagger \otimes n} \ket{\phi}\,,
\end{align*}
where $M' = Q^{\otimes m} M Q^{\dagger \otimes m}$. 

As discussed earlier, a basis for the space of Hermitian operators on $m$ qubits is the $m$-qubit Pauli group $\mathcal P_m$, which has size $O(\text{poly}(n))$. The expansion of $M'$ in that basis can be computed in polynomial time by basic linear algebra. Embedding the first $m$ qubits into all $n$ qubits gives the obvious embedding of $\mathcal P_m$ into $\mathcal P_n$, and this also gives (the same, still polynomial-size) expansion of $M'$ into $n$-qubit Paulis. We write
$$
M' = \sum_{\sigma \in \mathcal P_n \cap \mathcal P_m} \alpha_\sigma \sigma\,.
$$
We emphasize that this is a sum over polynomially many terms, and that each coefficient can be calculated from knowledge of $M$ and $Q$ in polynomial time. Moreover, since $V$ is a Clifford circuit, its conjugation action $\sigma \mapsto \sigma^V := V^\dagger \sigma V$ on a Pauli group element $\sigma \in \mathcal P_n$ is easily computed by direct gate-by-gate matrix multiplication (see, e.g.,~\cite{Gottesman}). 

We now return to the main calculation, to see that
\begin{align*}
\bra{\psi} U^\dagger (M \otimes I) U \ket{\phi} 
&= \bra{\psi} Q^{\otimes n} V^\dagger (M' \otimes I) V Q^{\dagger \otimes n} \ket{\phi}\\
&= \sum_{\sigma \in \mathcal P_n \cap \mathcal P_m} \alpha_\sigma \bra{\psi} Q^{\otimes n} V^\dagger \sigma V Q^{\dagger \otimes n} \ket{\phi}\\
&= \sum_{\sigma \in \mathcal P_n \cap \mathcal P_m} \alpha_\sigma \bra{\psi} Q^{\otimes n}\sigma^V Q^{\dagger \otimes n} \ket{\phi}\\
&= \sum_{\sigma \in \mathcal P_n \cap \mathcal P_m} \alpha_\sigma \prod_{j=1}^n \bra{\psi_j} Q \sigma^V_j Q^\dagger \ket{\phi_j}\,.
\end{align*}
The sum and product in the final expression are both of polynomial size, and each term in the product can be computed in constant time.
\end{proof}

\section{Qubit solutions to Yang-Baxter}

\subsection{The four solution families}

Hietarinta classified all solutions to the Yang-Baxter equation in the qubit (i.e., $4 \times 4$) case~\cite{Hietarinta93}. The qubit solutions which are also unitary operators were identified by Dye~\cite{Dye03}. All of these are of the form 
\begin{equation}\label{eq:ybe-solutions}
R = k(Q \otimes Q) ST (Q \otimes Q)^{-1}
\end{equation}
where $k$ is a unit-norm scalar, $T$ is the swap gate, and
$$
Q = 
\begin{pmatrix} 
	a & b \\ 
	c & d 
\end{pmatrix}
$$
is an invertible matrix. The trivial solution is $S = T$ which implies $R = k \identity$. There are four nontrivial solution families, depending on the possible values taken by $S$, which are listed below, along with the required conditions on the matrix entries.
\begin{align*}
S_1 &=
\begin{pmatrix} 1 & 0 & 0 & 0\\
      0 & p & 0 & 0\\
      0 & 0 & q & 0\\
      0 & 0 & 0 & r\\
    \end{pmatrix} 
~~~~ &1 = |p| = |q| = |r| \text{ ; } c = -a\bar{b}/\bar{d}\\
S_2 &= \begin{pmatrix} 0 & 0 & 0 & p\\
      0 & 0 & 1 & 0\\
      0 & 1 & 0 & 0\\
      q & 0 & 0 & 0\\
    \end{pmatrix}
~~~~&p = \frac{(b \bar b + d \bar d)(\bar a b + \bar c d)}{(a \bar a + c \bar c)(a \bar b + c \bar d)} \text{ ; } q = 1/p \text{ ; } c \neq -a\bar b / \bar d\\
S_3 &= \begin{pmatrix} 0 & 0 & 0 & p\\
      0 & 0 & 1 & 0\\
      0 & 1 & 0 & 0\\
      q & 0 & 0 & 0\\
    \end{pmatrix}
~~~~&p\bar p = \frac{(d \bar d)^2}{(a \bar a)^2} \text{ ; } q \bar q = \frac{(a \bar a)^2}{(d \bar d)^2} \text{ ; } |pq| = 1 \text{ ; } c = -a\bar{b}/\bar{d}\\
S_4 &= 
\frac{1}{\sqrt{2}}\begin{pmatrix} 1 & 0 & 0 & 1 \\
      0 & 1  & 1  & 0\\
      0 & 1  & -1  & 0\\
      -1  & 0 & 0 & 1 \\
    \end{pmatrix}
~~~~&|a| = |d| \text{ ; } c = -a \bar b / \bar d\,.
\end{align*}

For $j = 1, 2, 3, 4$, let $R_j$ be the Yang-baxter operator~\eqref{eq:ybe-solutions} resulting from choosing $S = S_j$. 

\subsection{Families one, two and three are unlikely to be universal}

We will show that Theorem \ref{thm:main} applies to the single-element gate sets $\{R_1\}$, $\{R_2\}$, and $\{R_3\}$. We assume that all of the above matrix entries are exactly computable in constant time via a classical algorithm.

The gate $R_1$ has the form \eqref{eq:general-gate-form} where $C_i = \identity$, $P_i = T$, and $D_i = kS_1$. It remains to check that $Q$ satisfies property $(G)$ where $G$ is the trivial group consisting only of the identity; this is confirmed by Lemma \ref{lem:property-1} below. 

For the gate $R_2$, we set
$$
M = 
\begin{pmatrix} 
	0 & \sqrt{p} \\ 
	1/\sqrt{p} & 0 
\end{pmatrix}
$$
and check that $M \otimes M = S_2$. It follows that $R_2 = kT (QMQ^{-1} \otimes QMQ^{-1})$ is not an entangling gate. Since $R_2$ is unitary, so is $QMQ^{-1}$. By the spectral theorem, there exist diagonal $V$ and unitary $U$ such that $UVU^{-1} = QMQ^{-1}$. Observe that $R_2 = (U \otimes U) k(V \otimes V)T (U \otimes U)^{-1}$ satisfies the conditions of Theorem \ref{thm:main}. 

For the gate $R_3$, we first rewrite the matrices as follows. Set
$$
N = 
\begin{pmatrix} 
	p^{-1/4} & 0 \\ 
	0 & p^{1/4} 
\end{pmatrix}
$$
and $Q' = QN^{-1}$ and $S_3' = (N \otimes N) S_3 (N \otimes N)^{-1}$. It's not hard to check that
$$
R_3 = k (Q \otimes Q) S_3 T (Q \otimes Q)^{-1} = k(Q' \otimes Q') S_3' T (Q' \otimes Q')^{-1}~,
$$
and that $Q'$ and $S_3'$ satisfy the conditions of the third YBE solution family, with the additional property that $p = 1$ and $|q| = 1$. Note further that $S_3' (X \otimes X)$ is a diagonal unitary operator, where
$$
X = 
\begin{pmatrix} 
	0 & 1 \\ 
	1 & 0 
\end{pmatrix}\,.
$$
We now see that $R_3$ is of the form \eqref{eq:general-gate-form} from Theorem \ref{thm:main}, where $C_i = X$, $D_i = kS_3' (X \otimes X)$, and $P = T$. It remains to check that $Q'$ satisfies property $(\langle X \rangle)$, which is done in Lemma \ref{lem:property-2} below.

\begin{lemma}\label{lem:property-1}
Let $Q$ be an invertible $2 \times 2$ matrix defined by
$$
Q = \begin{pmatrix} 
	a & b \\ 
	c & d 
\end{pmatrix}\,,
$$
such that $c = -a \bar b / \bar d$. Then $Q$ satisfies property $(\identity)$.
\end{lemma}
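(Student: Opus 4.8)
The plan is to unwind the definition of property $(\identity)$ and then verify the required inequalities by a direct computation. Since here $G$ is the trivial subgroup $\{\identity\}$ of $\mathcal S_2$, property $(\identity)$ asserts only that $\sum_j A_{kj}B_{jl} \le 1$ for all $k,l \in \{1,2\}$, where $A_{ij} = |Q_{ij}|$ and $B_{ij} = |(Q^{-1})_{ij}|$. Note that the hypothesis $c = -a\bar b/\bar d$ presupposes $d \ne 0$, and invertibility of $Q$ forces $a \ne 0$; I will use both facts freely.

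First I would compute $\det Q$. Substituting $c = -a\bar b/\bar d$ gives $\det Q = ad - bc = ad + ab\bar b/\bar d = a(|b|^2 + |d|^2)/\bar d$, hence $|\det Q| = |a|(|b|^2+|d|^2)/|d|$. Using the adjugate formula $Q^{-1} = (\det Q)^{-1}\bigl(\begin{smallmatrix} d & -b \\ -c & a\end{smallmatrix}\bigr)$ together with $|c| = |a||b|/|d|$, I would then record the eight magnitudes: $A_{11} = |a|$, $A_{12} = |b|$, $A_{21} = |a||b|/|d|$, $A_{22} = |d|$, and $B_{11} = |d|^2/(|a|(|b|^2+|d|^2))$, $B_{12} = |b||d|/(|a|(|b|^2+|d|^2))$, $B_{21} = |b|/(|b|^2+|d|^2)$, $B_{22} = |d|/(|b|^2+|d|^2)$.

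Next I would evaluate the four quantities $\sum_j A_{kj}B_{jl}$. A short calculation shows that the two ``diagonal'' cases $(k,l) = (1,1)$ and $(k,l) = (2,2)$ each collapse to $(|b|^2+|d|^2)/(|b|^2+|d|^2) = 1$, while the two ``off-diagonal'' cases $(k,l) = (1,2)$ and $(k,l) = (2,1)$ each collapse to $2|b||d|/(|b|^2+|d|^2)$. The only genuine inequality needed is then $2|b||d| \le |b|^2 + |d|^2$, i.e.\ AM--GM (equivalently $(|b|-|d|)^2 \ge 0$); this yields $\sum_j A_{kj}B_{jl} \le 1$ in all four cases, which is exactly property $(\identity)$.

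There is no real obstacle: the argument is a bounded bookkeeping computation and the single inequality invoked is elementary. The only points requiring care are the algebra with absolute values when simplifying $\det Q$ and the entries of $Q^{-1}$, and checking that the factors of $|a|$ cancel cleanly so that each of the four sums depends only on $|b|$ and $|d|$.
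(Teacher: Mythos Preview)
Your proposal is correct and follows essentially the same approach as the paper: both unwind the definition, observe that $a\neq 0$ (and $d\neq 0$) under the hypotheses, and verify the four cases by direct computation, finding the diagonal sums equal to $1$ and the off-diagonal sums equal to $2|b||d|/(|b|^2+|d|^2)\le 1$ via $(|b|-|d|)^2\ge 0$. The only cosmetic difference is that you precompute $|\det Q|$ and all eight magnitudes up front, whereas the paper simplifies each case from $(|a||d|+|b||c|)/|ad-bc|$ or $2|a||b|/|ad-bc|$ as it goes; the content is identical.
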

\begin{proof}
Define the relevant matrices
$$
A = \begin{pmatrix} 
	|a| & |b| \\ 
	|c| & |d| 
\end{pmatrix}
\qquad \text{and} \qquad
B = \frac{1}{|ad-bc|}
\begin{pmatrix} 
	|d| & |b| \\ 
	|c| & |a| 
\end{pmatrix}\,.
$$
Note that $a = 0$ implies $c = 0$, which would make $Q$ non-invertible. 

We compute each case separately. First let $k = l = 1$.
\begin{align*}
A_{11}B_{11} + A_{12}B_{21}
& = \frac{|a||d| + |b||c|}{|ad-bc|} 
	= \frac{|a||d| + |b|| a \bar b / \bar d|}{|ad + b a \bar b / \bar d|}\\
& = \frac{|\bar d|(|a||d|^2 + |a||b|^2)}{|\bar d|(|a d \bar d + a b \bar b|)}
 	= \frac{|a|(|d|^2 + |b|^2)}{|a||d \bar d + b \bar b|} = 1\,.
\end{align*}
Next, let $k = l = 2$, and we again get
$$
A_{21}B_{12} + A_{22}B_{22} = \frac{|c||b| + |d||a|}{|ad-bc|} = 1\,.
$$
Now suppose $k=1$ and $l=2$. 
\begin{align*}
A_{11}B_{12} + A_{12}B_{22}
& = \frac{|a||b| + |b||a|}{|ad-bc|} 
	= \frac{2|a||b|}{|ad+ab\bar b/\bar d|} \\
& = \frac{2|a||b||d|}{|ad\bar d + ab\bar b|} 
	= \frac{2|b||d|}{|d|^2 + |b|^2}\,. 
\end{align*}
It remains to note that
$$
|b|^2 + |d|^2 - 2|b||d| = (|b| - |d|)^2 > 0\,.
$$
Finally, we choose $k=2$ and $l=1$.
$$
A_{21}B_{11} + A_{22}B_{21}
= \frac{|c||d| + |d||c|}{|ad-bc|} 
= \frac{2|a||b|}{|ad-bc|}  \leq 1\,,
$$
by two applications of $c = -a \bar b / \bar d$ and the previous case.
\end{proof}

\begin{lemma}\label{lem:property-2}
Let $Q$ be an invertible $2 \times 2$ matrix defined by
$$
Q = \begin{pmatrix} 
	a & b \\ 
	c & d 
\end{pmatrix}\,,
$$
such that $c = -a \bar b / \bar d$ and $|a|^2 = |d|^2$. Then $Q$ satisfies property $(\mathcal S_2)$.
\end{lemma}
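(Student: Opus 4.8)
The plan is to follow the same case-by-case strategy as in Lemma \ref{lem:property-1}, but now checking the condition $\sum_j A_{k,\pi j}B_{jl}\le 1$ for \emph{both} elements $\pi\in\mathcal S_2$: the identity $\pi=\id$ and the transposition $\pi=(1\,2)$. For $\pi=\id$, the required inequalities are literally the four cases already established in Lemma \ref{lem:property-1}, whose hypothesis $c=-a\bar b/\bar d$ we are still assuming; so nothing new is needed there and I would simply invoke that lemma. The real content is the transposition case, where I must show $A_{k,\bar 1}B_{1l}+A_{k,\bar 2}B_{2l}\le 1$ with the row index of $A$ swapped, i.e.\ $A_{k2}B_{1l}+A_{k1}B_{2l}\le 1$ for each $(k,l)\in\{1,2\}^2$. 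It is here that the extra hypothesis $|a|^2=|d|^2$, which was not present in Lemma \ref{lem:property-1}, will be used.

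Concretely, I would first record the same magnitude matrices
$$
A=\begin{pmatrix}|a|&|b|\\|c|&|d|\end{pmatrix},\qquad
B=\frac{1}{|ad-bc|}\begin{pmatrix}|d|&|b|\\|c|&|a|\end{pmatrix},
$$
and note as before that $c=-a\bar b/\bar d$ forces $a\neq 0$ (else $Q$ is singular) and gives the simplification $|ad-bc|=|a|(|d|^2+|b|^2)/|d|$. Then I would expand the four transposed sums. For instance $k=1,l=1$ gives $A_{12}B_{11}+A_{11}B_{21}=(|b||d|+|a||c|)/|ad-bc|$; substituting $|c|=|a||b|/|d|$ this becomes $(|b||d|+|a|^2|b|/|d|)/|ad-bc| = |b|(|d|^2+|a|^2)/(|d|\,|ad-bc|)=|b|(|d|^2+|a|^2)/(|a|(|d|^2+|b|^2))$. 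Using $|a|=|d|$ the numerator is $2|d|^2|b|$ and the denominator is $|a|(|d|^2+|b|^2)=|d|(|d|^2+|b|^2)$, so the expression equals $2|d||b|/(|d|^2+|b|^2)\le 1$ by AM--GM, exactly as in the last two cases of Lemma \ref{lem:property-1}. The remaining three choices of $(k,l)$ are handled the same way: each reduces, after substituting $|c|=|a||b|/|d|$ and $|a|=|d|$, either to the identity $2|b||d|/(|b|^2+|d|^2)\le 1$ or to one of the equalities $A_{11}B_{11}+A_{12}B_{21}=1$ already computed, possibly bounded above by it.

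I do not anticipate a genuine obstacle — this is a bounded four-case (times two group elements) computation — but the one place to be careful is bookkeeping: making sure I apply the \emph{transposed} index pattern correctly (it is $A_{k,\pi j}$, so the column index of $A$ is permuted while the row index $k$ and the $B$ indices are untouched), and making sure the hypothesis $|a|=|d|$ is actually invoked wherever the bare inequality from Lemma \ref{lem:property-1} would otherwise fail. The cleanest write-up is: first dispatch $\pi=\id$ by citing Lemma \ref{lem:property-1}; then for $\pi=(1\,2)$ reduce every entry to a function of $|b|,|d|$ alone using both $c=-a\bar b/\bar d$ and $|a|^2=|d|^2$, and observe each of the four resulting quantities is either $1$ or of the form $2|b||d|/(|b|^2+|d|^2)$, hence $\le 1$ by $( |b|-|d|)^2\ge 0$.
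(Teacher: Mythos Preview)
Your proposal is correct and follows essentially the same approach as the paper: invoke Lemma~\ref{lem:property-1} for $\pi=\id$, then for $\pi=(1\,2)$ compute each of the four $(k,l)$ cases directly, using $c=-a\bar b/\bar d$ and $|a|=|d|$ to reduce every expression either to $1$ or to $2|b||d|/(|b|^2+|d|^2)\le 1$. The paper's intermediate algebra is organized slightly differently (e.g.\ it rewrites $b=-\bar c d/\bar a$ for the $k=2,l=1$ case rather than uniformly substituting for $|c|$), but the structure and the final bounds are identical.
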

\begin{proof}
Define the matrices $A$ and $B$ as in Lemma \ref{lem:property-1}. The case of $\pi$ equal to the trivial permutation is handled by Lemma \ref{lem:property-1}. We compute the remaining cases. Set $\pi = (12)$ and $k = l = 1$. Then
\begin{align*}
A_{12}B_{11} + A_{11}B_{21} 
& = \frac{|a||c| + |b||d|}{|ad-bc|}
	= \frac{|aa \bar b / \bar d| + |bd|}{|ad-a b \bar b/\bar d|} \\
& = \frac{|aa\bar b| + |b d \bar d|}{|ad \bar d + a b \bar b|}
= \frac{|aa \bar b| + |b a \bar a|}{|a a \bar a + a b \bar b|}
= \frac{|a \bar b| + |b \bar a|}{|a|^2 + |b|^2}\,,
\end{align*}
where we have applied the facts $c = -a \bar b / \bar d$ and $a \bar a = d \bar d$ and $a \neq 0$. Now note that
$$
|a|^2 + |b|^2 - (|a \bar b| + |b \bar a|) = |a|^2 + |b|^2 - 2|a||b| = (|a|-|b|)^2 \geq 0\,.
$$
Hence $(|a \bar{b}| + |b \bar{a}|) / (|a|^2 + |b|^2) \leq 1$. For $k = l = 2$, we again get
$$
A_{22}B_{12} + A_{21}B_{22} = \frac{|a||c| + |b||d|}{|ad-bc|} \leq 1\,.
$$
Now set $k = 1$ and $l = 2$. Then
\begin{align*}
A_{12}B_{12} + A_{11}B_{22} 
&= \frac{|a|^2 + |b|^2}{|ad-bc|} 
	= \frac{|a|^2 + |b|^2}{|ad + ab\bar b/\bar d|} \\
&= \frac{|\bar d|(|a|^2 + |b|^2)}{|ad\bar d + ab\bar b|}
	= \frac{|d|(|a|^2 + |b|^2)}{|a|(|d|^2 + |b|^2)} = 1\,.
\end{align*}
Finally, for $k=2$ and $l=1$, write $b = -\bar c d / \bar a$ and calculate
\begin{align*}
A_{22}B_{11} + A_{21}B_{21} 
&= \frac{|c|^2 + |d|^2}{|ad-bc|} 
	= \frac{|c|^2 + |d|^2}{|ad + dc\bar c/\bar a|} \\
&= \frac{|\bar a|(|c|^2 + |d|^2)}{|d a\bar a + dc\bar c|}
	= \frac{|a|(|c|^2 + |d|^2)}{|d|(|c|^2 + |a|^2)} = 1\,.
\end{align*}
\end{proof}

To conclude, we have shown the following.

\begin{theorem}
Let $R \in \{R_1, R_2, R_3\}$ be a unitary solution to the Yang-Baxter equation on qubits. Then $\mathcal I(\{R\})$ is in PromiseBPP.
\end{theorem}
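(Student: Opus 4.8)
The plan is to reduce the statement entirely to Theorem~\ref{thm:main} via its corollary: if a one-element gate set $\{R\}$ has $R$ of the form \eqref{eq:general-gate-form}, that is $R = (Q \otimes Q)\,D\,P\,(C \otimes C)\,(Q \otimes Q)^{-1}$ with $D$ a diagonal unitary, $C$ a permutation matrix, $P \in \{\identity, T\}$, and $Q$ satisfying property $(\langle C \rangle)$, then $\mathcal I(\{R\}) \in$ PromiseBPP. So for each $j \in \{1,2,3\}$ the task splits into two parts: first massage the general solution \eqref{eq:ybe-solutions} for $S = S_j$ into the shape \eqref{eq:general-gate-form}, and then verify the corresponding instance of property $(G)$.

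For $R_1$ no massaging is needed. Writing $R_1 = (Q \otimes Q)(kS_1)\,T\,(Q \otimes Q)^{-1}$ directly from \eqref{eq:ybe-solutions}, the family-one constraints $|k| = |p| = |q| = |r| = 1$ make $kS_1$ a diagonal unitary, so we take $D = kS_1$, $P = T$, $C = \identity$, and $G$ is trivial; the family constraint $c = -a\bar b/\bar d$ is precisely the hypothesis of Lemma~\ref{lem:property-1}, giving property $(\identity)$. For $R_2$ I would exploit that $S_2$ factors as $M \otimes M$ for an explicit $2 \times 2$ matrix $M$, so that $R_2 = kT\,(QMQ^{-1} \otimes QMQ^{-1})$; since $R_2$ is unitary and $T$ commutes with tensor squares, $QMQ^{-1}$ is unitary, and diagonalizing it as $UVU^{-1}$ ($U$ unitary, $V$ diagonal unitary) rewrites $R_2 = (U \otimes U)\,k(V \otimes V)\,T\,(U \otimes U)^{-1}$. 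This is \eqref{eq:general-gate-form} with $Q \mapsto U$, $D = k(V \otimes V)$, $P = T$, $C = \identity$; property $(\identity)$ holds for free because every unitary satisfies property $(\mathcal S_2)$.

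For $R_3$ one extra normalization is required: conjugating by $N \otimes N$ with $N = \mathrm{diag}(p^{-1/4}, p^{1/4})$ and setting $Q' = QN^{-1}$, $S_3' = (N \otimes N)S_3(N \otimes N)^{-1}$ preserves the form $k(Q' \otimes Q')\,S_3'\,T\,(Q' \otimes Q')^{-1}$ while forcing $p = 1$, $|q| = 1$, and one checks that $Q'$ still satisfies $c' = -a'\bar b'/\bar d'$ and (from the family-three conditions) $|a'| = |d'|$. In this normalized form $S_3'(X \otimes X)$ is a diagonal unitary, which puts $R_3$ into the shape \eqref{eq:general-gate-form} with $D = kS_3'(X \otimes X)$, $P = T$, $C = X$, so that $G = \langle X \rangle \cong \mathcal S_2$; Lemma~\ref{lem:property-2} applied to $Q'$ then supplies property $(\mathcal S_2)$. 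With all three decompositions in hand, Theorem~\ref{thm:main} and its corollary finish the proof.

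The routine final assembly is not the hard part. The real work lies in the two ingredients above: (i) finding, for each $R_j$, a change of basis that makes the phase part diagonal and leaves a genuine permutation in the $C \otimes C$ slot — trivial for $R_1$, but requiring the $M \otimes M$ factorization plus a spectral decomposition for $R_2$ and the $N$-rescaling for $R_3$; and (ii) the case-by-case verification of property $(G)$ in Lemmas~\ref{lem:property-1} and~\ref{lem:property-2}. For (ii) the matrices $Q$ here are generally non-unitary, so the inequalities $\sum_j A_{k,\pi j} B_{jl} \le 1$ are not automatic; they hold only after a short computation that repeatedly uses the Yang--Baxter-imposed relations among the entries $a,b,c,d$.
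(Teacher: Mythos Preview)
Your proposal is correct and follows essentially the same route as the paper: for each of $R_1, R_2, R_3$ you perform exactly the decomposition the paper gives (trivial for $R_1$; the $M\otimes M$ factorization and spectral diagonalization for $R_2$; the $N$-rescaling followed by the $X\otimes X$ trick for $R_3$), and then invoke Lemmas~\ref{lem:property-1} and~\ref{lem:property-2} in the same places to verify property~$(G)$ before applying Theorem~\ref{thm:main}. Your identification of where the actual work lies---the massaging into form~\eqref{eq:general-gate-form} and the entrywise verification of property~$(G)$ for non-unitary $Q$---is also accurate.
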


\noindent In particular, if one could perform (perhaps encoded) universal quantum computation with these circuits then PromiseBQP $=$ PromiseBPP. We can also formulate the lack of universality for these solutions in the following terms.

\begin{theorem}
Let $R \in \{R_1, R_2, R_3\}$ be a unitary solution to the Yang-Baxter equation on qubits, and let $\rho_n: B_n \rightarrow SU(2^n)$ be the corresponding unitary representation of the braid group. Then the image of $\rho_n$ is not dense in $SU(2^n)$ for any $n \geq 2$, unless \emph{PromiseBQP} $=$ \emph{PromiseBPP}.
\end{theorem}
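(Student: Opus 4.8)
The plan is to obtain this as the complexity-theoretic contrapositive of the preceding theorem. Assume, toward establishing the ``unless'' clause, that $\rho_n(B_n)$ is dense in $SU(2^n)$ for some fixed $n \geq 2$. Since the images of braids under $\rho_n$ are precisely the $\{R\}$-circuits on $n$ qubits, this means that $\{R\}$-circuits on $n$ qubits approximate every element of $SU(2^n)$ to arbitrary precision. First I would upgrade this ``local'' density into universality in the sense relevant to $\mathcal I(\{R\})$: by the standard padding argument mentioned just after the definition of universality, density on $n$ qubits extends to density in $SU(2^N)$ for every $N \geq n$, and density in the special unitary group --- rather than in the full unitary group --- already suffices for $\mathcal I(\{R\})$ to be PromiseBQP-hard, exactly as for the example $\mathcal R = \{\text{Hadamard}, \text{Toffoli}\}$ discussed earlier. (The unit-norm scalar $k$ carried by $R$ contributes only a global phase, which is invisible to the amplitudes $\bra{x} C \ket{y}$ that define $\mathcal I(\{R\})$; if one prefers, one reads ``dense in $SU(2^n)$'' projectively.) Concretely, fix any universal gate set $\mathcal R'$; using the Solovay--Kitaev theorem together with the density just established, one approximates each gate of $\mathcal R'$ by an $\{R\}$-circuit to inverse-polynomial precision, thereby reducing any PromiseBQP computation in classical polynomial time to a polynomial-size instance of $\mathcal I(\{R\})$, with the $2/3$ versus $1/3$ promise preserved after the usual amplification of the underlying computation's success probability.

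Next I would invoke the theorem established immediately above, which gives $\mathcal I(\{R\}) \in$ PromiseBPP for every $R \in \{R_1, R_2, R_3\}$. Since PromiseBPP is closed under classical polynomial-time reductions, the reduction of the previous paragraph places PromiseBQP --- by definition the class of problems reducible to $\mathcal I(\mathcal R')$ for a universal $\mathcal R'$ --- inside PromiseBPP. Together with the trivial inclusion PromiseBPP $\subseteq$ PromiseBQP, this gives PromiseBQP $=$ PromiseBPP, which is the conclusion of the ``unless'' clause; equivalently, unless the two classes coincide, no $\rho_n$ has image dense in $SU(2^n)$.

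The bookkeeping in the reduction --- rescaling amplitudes, amplifying the promise gap, and bounding circuit lengths via Solovay--Kitaev --- is routine, so the only step carrying real content is the conversion of the topological density statement into one about $\mathcal I(\{R\})$. I expect the one genuine subtlety there to be the passage from ``dense in the proper closed subgroup $SU(2^n)$ (modulo phases)'' to ``computationally universal,'' but this is exactly the phenomenon already highlighted in the paper's remark on $\{\text{Hadamard}, \text{Toffoli}\}$, so it introduces no new difficulty.
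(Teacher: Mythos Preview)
Your proposal is correct and follows essentially the same route as the paper: assume density of $\rho_n$, invoke Solovay--Kitaev to compile an arbitrary quantum circuit into an $\{R\}$-circuit, and then apply the preceding theorem that $\mathcal I(\{R\}) \in$ PromiseBPP to collapse PromiseBQP into PromiseBPP. The only cosmetic difference is that the paper makes the padding explicit by encoding each logical qubit into a block of $n/2$ physical qubits (so that a two-qubit gate on adjacent logical qubits becomes an $n$-qubit operator approximable via the assumed density), whereas you appeal to the abstract extension-of-density remark; these are the same argument.
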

\begin{proof}
(Sketch) For a contradiction, suppose there exists an $n \geq 2$ such that the image of $\rho_n$ is dense. Let $C$ be an arbitrary $m$-qubit quantum circuit. We can assume without loss of generality that $C$ only consists of $2$-qubit gates acting on adjacent qubits, and that $n$ is even. For each of the $m$ qubits, assign $n/2$ qubits from the space of $\rho_n$. By the density of the image of $\rho_n$, we can then simulate $C$ inside $\rho_{mn/2}$ gate-by-gate via the Solovay-Kitaev theorem. Then we can use the classical algorithm from Theorem \ref{thm:main} to approximate the relevant matrix entry of the resulting $R$-circuit, thus solving the PromiseBQP-hard problem of approximating the corresponding entry of $C$.
\end{proof}

\subsection{Family four is unlikely to be universal}

Recall that the fourth solution family is of the form $R_4 = k (Q \otimes Q)S_4T(Q\otimes Q)^{-1}$. We begin by demonstrating a Clifford circuit which is equal to the gate $S_4T $.

$$
\raisebox{-11pt}{
$S_4 T
	= \frac{1}{\sqrt{2}}
	\begin{pmatrix} 1 & 0 & 0 & 1 \\
			      	0 & 1  & 1  & 0\\
				0 & -1  & 1  & 0\\
			       -1  & 0 & 0 & 1 \\
	\end{pmatrix} =~~
$}  
\Qcircuit @C=1em @R=.5em {
&\gate{Z} &\ctrl{1} &\gate{X} &\gate{H} &\qw\\
&\gate{Z} &\targ &\gate{Z} &\qw &\qw}
$$

We also note that, in this solution family, $Q$ is a scaled unitary operator. To see this, note that
$$
Q^\dagger Q = 
\begin{pmatrix} 
	|a|^2 + |c|^2 & \bar a b + \bar c d \\ 
	a \bar b + c \bar d & |b|^2 + |d|^2
\end{pmatrix}
= \begin{pmatrix} 
	|a|^2 + |c|^2 & 0 \\ 
	0 & |b|^2 + |a|^2
\end{pmatrix}
= \bigl(|a|^2 + |b|^2\bigr)
\begin{pmatrix} 
	1 & 0 \\ 
	0 & 1
\end{pmatrix}
$$
where we first applied the condition $c = - a \bar b / \bar d$ to the off-diagonal elements and the condition $|a|^2 = |d|^2$ to the diagonal ones; the last equality follows from combining these two conditions to get $|c|^2 = |b|^2$. Now set $\alpha = (|a|^2 + |b|^2)^{1/2}$ and $Q_1 = \alpha^{-1} Q$. Using the above, one easily checks that $Q_1$ is unitary and that $Q_1^\dagger = \alpha Q^{-1}$. It follows that
$$
(Q \otimes Q) A (Q \otimes Q)^{-1} 
= (\alpha Q_1 \otimes \alpha Q_1) A (\alpha^{-1}Q_1^\dagger \otimes \alpha^{-1} Q_1^\dagger) 
= (Q_1 \otimes Q_1) A (Q_1 \otimes Q_1)^\dagger
$$
for any $A$. For us it will thus suffice to assume that $Q$ is in fact unitary. This allows us to apply Theorem \ref{thm:clifford} and get the following result.

\begin{theorem}
Let $U$ be a $\{R_4\}$-circuit on $n$ qubits, $M$ a Hermitian operator on $O(\log(n))$ qubits, and $\ket{\psi}, \ket{\phi}$ arbitrary $n$-qubit product states. Then $\bra{\psi} U^\dagger (M \otimes I) U \ket{\phi}$ can be computed exactly in $O(\text{poly}(n))$ classical time.
\end{theorem}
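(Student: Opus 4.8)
The plan is to reduce this statement to Theorem \ref{thm:clifford}, which already does all the work. Theorem \ref{thm:clifford} takes a gate of the form $R = (Q \otimes Q) S (Q \otimes Q)^\dagger$ with $S \in \mathcal C_2$ and $Q$ a single-qubit \emph{unitary}, and shows that $\bra{\psi} U^\dagger (M \otimes I) U \ket{\phi}$ is classically computable in polynomial time for any $\{R\}$-circuit $U$, any Hermitian $M$ on $O(\log n)$ qubits, and any product states $\ket{\psi}, \ket{\phi}$. So the entire task is to exhibit $R_4$ in this form.

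First I would recall from the preceding discussion that $Q$ in the fourth solution family is a scaled unitary: the conditions $c = -a\bar b/\bar d$ and $|a| = |d|$ force $Q^\dagger Q = (|a|^2 + |b|^2) \identity$, and setting $Q_1 = \alpha^{-1} Q$ with $\alpha = (|a|^2+|b|^2)^{1/2}$ gives a genuine single-qubit unitary with $Q_1^\dagger = \alpha Q^{-1}$. Because the scalar $\alpha$ cancels between the left and right conjugating factors, for any operator $A$ we have $(Q \otimes Q) A (Q \otimes Q)^{-1} = (Q_1 \otimes Q_1) A (Q_1 \otimes Q_1)^\dagger$; in particular $R_4 = k (Q_1 \otimes Q_1) S_4 T (Q_1 \otimes Q_1)^\dagger$. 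The unit-norm scalar $k$ only contributes a global phase $|k|^2 = 1$ when $R_4$ appears together with $R_4^\dagger$ in $U^\dagger (\cdot) U$, so it is harmless and can be absorbed.

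Next I would invoke the Clifford-circuit identity displayed just above the theorem, namely $S_4 T = (H \otimes I)(X \otimes Z)\, CNOT\, (Z \otimes Z)$ (read in the left-to-right circuit convention), which exhibits $S_4 T \in \mathcal C_2$ explicitly as a product of the generators $H$, $X$, $Z$ and $CNOT$. One should verify this identity by a direct $4 \times 4$ matrix multiplication; this is the only computational check required, and it is routine. Having done so, set $S := k\, S_4 T \in \mathcal C_2$ (the global phase $k$ does not affect membership in $\mathcal C_2$ up to phase, and is immaterial for the sesquilinear form in question) and $Q := Q_1$. Then $R_4 = (Q \otimes Q) S (Q \otimes Q)^\dagger$ is precisely of the form required by Theorem \ref{thm:clifford}, and applying that theorem to the $\{R_4\}$-circuit $U$, the observable $M$, and the product states $\ket{\psi}, \ket{\phi}$ yields the claim.

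There is essentially no obstacle here beyond bookkeeping: the substance was already carried out in Theorem \ref{thm:clifford} (pulling the $Q$'s to the outside so the interior is a Clifford circuit in $\mathcal C_n$, expanding the conjugated observable $M' = Q^{\otimes m} M Q^{\dagger \otimes m}$ in the polynomially-sized Pauli basis $\mathcal P_m$, propagating each Pauli through the Clifford circuit $V$ gate-by-gate, and evaluating the resulting product-state expectation factor by factor). The one thing requiring care is confirming that $Q_1$ is unitary and that the scalars $\alpha$ and $k$ genuinely drop out, so that the hypotheses of Theorem \ref{thm:clifford} — single-qubit \emph{unitary} $Q$ and $S \in \mathcal C_2$ — are met on the nose; both points are immediate from the computations above.
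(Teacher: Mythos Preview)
Your proposal is correct and follows exactly the paper's own argument: the preceding discussion in the paper establishes that $S_4 T \in \mathcal C_2$ via the displayed Clifford circuit and that $Q$ is a scaled unitary (so $Q$ may be replaced by the genuine unitary $Q_1$), after which the theorem is obtained as an immediate application of Theorem~\ref{thm:clifford}. Your handling of the scalars $\alpha$ and $k$ is also correct and matches the paper's reasoning.
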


\section{Some simple high-dimensional solutions}

Finally, we list some simple unitary solution families to the Yang-Baxter equation that exist in every dimension, and to which Theorem \ref{thm:main} applies. We begin by observing that, whenever a $2$-qudit gate $S$ is a solution, then by \eqref{eq:pull-q-out} so is $(Q \otimes Q) S (Q \otimes Q)^{-1}$ for any $1$-qudit gate $Q$.

For $A, B \in \opnm U(V)$, the operator $T(A \otimes B)$ is a solution to the Yang-Baxter equation if and only if $A$ and $B$ commute. This is easily seen by following the wires in the circuits below.
$$
\begin{tabular}{ l c r }
\Qcircuit @C=1em @R=.5em {
&\gate{A}	&\multigate{1}{T}	&\qw		&\qw				&\gate{A} 	&\multigate{1}{T}	&\qw\\
&\gate{B} 	&\ghost{T}		&\gate{A}	&\multigate{1}{T}	&\gate{B} 	&\ghost{T}		&\qw\\
&\qw		&\qw				&\gate{B}	&\ghost{T}		&\qw 	&\qw 			&\qw
}
& \raisebox{-22pt}{\text{vs.}} & 
\Qcircuit @C=1em @R=.5em {
&\qw		&\qw				&\gate{A}	&\multigate{1}{T}	&\qw 	&\qw 			&\qw\\
&\gate{A}	&\multigate{1}{T}	&\gate{B} 	&\ghost{T}		&\gate{A} 	&\multigate{1}{T}	&\qw\\
&\gate{B}	&\ghost{T}		&\qw		&\qw				&\gate{B} 	&\ghost{T}		&\qw
}
\end{tabular}
$$
If $A$ and $B$ do commute, then there's a unitary change of basis $Q$ on $V$ such that $Q^{-1}AQ$ and $Q^{-1}BQ$ are both diagonal. Therefore, Theorem~\ref{thm:main} applies to $T(A \otimes B)$, so any circuits using this gate are classically simulable. Of course, this is not surprising, as they do not even entangle the qudits.

More generally, suppose $S \in \opnm U(V \otimes V)$ is diagonal in the computational basis, and set
$$
\lambda_{ij} = \bra{ij} S \ket{ij} \qquad \text{for} \qquad i, j \in [d]\,,
$$
where $d = \dim V$. Note that
$$
S_{12} = S \otimes \identity = \bigoplus_{k \in [d]} P_k\,,
\qquad
S_{23} = \identity \otimes S = \bigoplus_{k \in [d]} S\,,
\qquad
\identity \otimes T = \bigoplus_{k \in [d]} T\,.
$$
where $P_k = \oplus_{l \in [d]} \lambda_{kl} \identity$. We also have
$$
S_{13} 
= (\identity \otimes T) (S \otimes \identity) (\identity \otimes T)
= \bigoplus_{k \in [d]} TP_kT\,.
$$
Substituting the above into the two sides of the algebraic Yang-Baxter equation \eqref{eq:AYBE}, we get
$$
\bigoplus_{k \in [d]} P_kTP_kTS
\qquad \text{and} \qquad
\bigoplus_{k \in [d]} STP_kTP_k
$$
Clearly, $P_k$ and $S$ are symmetric. Since $\bra{ab} T \ket{cd} = \delta_{ad}\delta_{bc} = \bra{cd} T \ket{ab}$, so is $T$. By applying the transpose to one of the two sides above, we see that $S$ satisfies algebraic Yang-Baxter. Thus $ST$ is a solution to the YBE, one to which Theorem \ref{thm:main} clearly applies.
\section{Appendix}
\label{sec:app}
We will now prove Lemmas \ref{lem:sampling} and \ref{lem:chernoff}.
\addtocounter{lemma}{-4}
\begin{lemma}\label{lem:sampling}
Let $\{P_j\}_{j=1}^n$ be probability distributions on $[d]$ and let $P = \Pi_j P_j$ be the corresponding product distribution over $[d]^n$. Suppose that we can calculate $P_j(k)$ for every $j$ and every $k$ in total time $\emph{poly(n, d)}$. Then there's a classical probabilistic algorithm that runs in time $\emph{poly(n, d)}$ and samples from $[d]^n$ according to a probability distribution $D$ such that $|P - D| \leq 1/2^{\emph{poly}(n)}$.
\end{lemma}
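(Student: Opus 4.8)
The plan is to exploit the product structure of $P$, reduce to sampling each single-coordinate distribution $P_j$ on $[d]$, and then confront the one genuinely non-trivial point: a classical algorithm can only produce finitely many random bits, so it cannot draw a true continuous uniform variate and we must control the error this introduces. First I would note that since $P = \prod_{j=1}^n P_j$, it suffices to output $y = (y_1,\dots,y_n)$ where each $y_j \in [d]$ is drawn independently from a distribution $D_j$ close to $P_j$; the resulting output distribution is $D = \prod_j D_j$, and by the standard hybrid (telescoping) argument for product distributions, $|P - D| \le \sum_{j=1}^n |P_j - D_j|$. Thus it is enough to sample each $P_j$ to within statistical distance roughly $1/(n\,2^{\mathrm{poly}(n)})$.

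To sample a single $P_j$, I would first compute the cumulative values $F_j(k) = \sum_{l \le k} P_j(l)$ for all $k \in [d]$; by hypothesis computing all the $P_j(k)$ takes $\mathrm{poly}(n,d)$ time, and the prefix sums add only $O(nd)$ arithmetic operations. I would then draw $m$ fair coin flips to obtain a dyadic number $\tilde u$ uniform on the grid $\{i/2^m : 0 \le i < 2^m\}$, and output the least $k$ with $\tilde u < F_j(k)$ (an inverse-CDF step, implementable by binary search in $\mathrm{poly}(d)$ time). This realizes a distribution $D_j$ for which $D_j(k)$ is $2^{-m}$ times the number of grid points in $[F_j(k-1), F_j(k))$; since rounding each endpoint to the grid costs at most $2^{-m}$, one gets $|D_j(k) - P_j(k)| < 2^{-m}$ for every $k$, hence $|P_j - D_j| \le d\,2^{-m-1}$.

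Combining the two estimates gives $|P - D| \le n d\, 2^{-m-1}$. Choosing $m$ to be a sufficiently large polynomial in $n$, with an additive $\lceil \log_2(nd) \rceil$ to absorb the $nd$ factor (still $\mathrm{poly}(n,d)$), makes the right-hand side at most $1/2^{\mathrm{poly}(n)}$, while the number of random bits consumed ($nm$) and the total running time remain polynomial in $n$ and $d$. This yields the claimed algorithm.

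The arithmetic here is routine; the only step needing care — and the place I expect the one real subtlety to lie — is the discretization analysis: verifying rigorously that substituting an $m$-bit dyadic approximation for the ideal uniform variable perturbs each $P_j$ by at most $\sim 2^{-m}$, and that these per-coordinate perturbations \emph{add} (via the product-distribution hybrid inequality) rather than compounding multiplicatively across the $n$ coordinates.
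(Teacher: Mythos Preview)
Your proposal is correct and follows essentially the same approach as the paper: per-coordinate inverse-CDF sampling using $m$ fair coins, then a telescoping argument to aggregate the discretization error across the $n$ coordinates. The only difference is that you telescope at the level of total-variation distance (obtaining $|P-D|\le nd\,2^{-m-1}$), whereas the paper telescopes the pointwise differences $|P(y)-D(y)|\le n/2^m$ and then sums over all $d^n$ strings to get the looser bound $nd^n/2^{m+1}$; your estimate is sharper, but both suffice for the stated conclusion.
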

\begin{proof}
To sample from $P_j$, flip $m$ unbiased coins to get an integer $0 \leq l \leq 2^{m}$. Subdivide $2^{m}$ into intervals according to 
$$
2^{m} = P_j(0)2^m + P_j(1)2^m + \cdots + P_j(d-1)2^m
$$
and output $k$ if $l$ falls into the $k$th interval. Then the probability $D_j(k)$ with which you output $k$ satisfies $|D_j(k) - P_j(k)| \leq 1/2^m$. Now do this for two indices, say $1$ and $2$ and note that
\begin{align*}
|P_1(k)P_2(l) - D_1(k)D_2(l)| 
&= |P_1(k)P_2(l) - D_1(k)D_2(l) + D_1(k)P_2(l) - D_1(k)P_2(l)|\\
&\leq  |P_2(l)(P_1(k) - D_1(k))| + |D_1(k)(P_2(l) - D_2(l))|\\
&\leq 2/2^m
\end{align*}
Extending this to the case of multiplying all $n$ distributions together, we get $|P(y)-D(y)| \leq n/2^m$ for all $y \in [d]^n$. The total variation distance then satisfies
$$
|P - D| = \frac{1}{2}\sum_{x \in [d]^n} |P(x) - D(x)| \leq \frac{n d^n}{2^m} < 2^{-n}
$$
so long as $m \geq 3n \log d$.
\end{proof}

\begin{lemma}\label{lem:chernoff}
Let $X_1, X_2, \ldots, X_n$ be independent complex-valued random variables with $\EE[X_j] = \mu$ and $|X_j| \leq b$ for all $j$. Let $S = \sum_j X_j/n$. Then
$$
\emph{Pr}\left[ \left|S - \mu \right| \geq \epsilon \right] \leq 4 \exp\left(-n\epsilon^2/8b^2\right)\,.
$$
\end{lemma}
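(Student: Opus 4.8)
The plan is to reduce the complex-valued statement to the standard real-valued Hoeffding inequality by separating each $X_j$ into real and imaginary parts. First I would write $X_j = A_j + iB_j$ with $A_j = \re(X_j)$ and $B_j = \im(X_j)$. Since the $X_j$ are independent and $A_j,B_j$ are measurable functions of them, the family $\{A_j\}$ is independent and so is $\{B_j\}$; moreover $\EE[A_j] = \re(\mu)$, $\EE[B_j] = \im(\mu)$, and $|A_j| \leq |X_j| \leq b$, $|B_j| \leq b$, so each $A_j$ (resp.\ $B_j$) takes values in a real interval of length at most $2b$.

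Next I would split the deviation event geometrically. Since $|S - \mu|^2 = (\re(S) - \re(\mu))^2 + (\im(S) - \im(\mu))^2$, the event $\{|S - \mu| \geq \epsilon\}$ forces at least one of $|\re(S) - \re(\mu)|$ and $|\im(S) - \im(\mu)|$ to be at least $\epsilon/\sqrt{2}$. A union bound then gives
$$
\Pr\bigl[\,|S - \mu| \geq \epsilon\,\bigr] \;\leq\; \Pr\!\left[\Bigl|\tfrac{1}{n}\sum_{j} A_j - \re(\mu)\Bigr| \geq \tfrac{\epsilon}{\sqrt{2}}\right] + \Pr\!\left[\Bigl|\tfrac{1}{n}\sum_{j} B_j - \im(\mu)\Bigr| \geq \tfrac{\epsilon}{\sqrt{2}}\right].
$$
I would then bound each term by the classical Hoeffding inequality: for independent real random variables each lying in an interval of length $2b$, the empirical mean deviates from its expectation by at least $t$ with probability at most $2\exp\bigl(-2n^2 t^2/(n(2b)^2)\bigr) = 2\exp(-nt^2/2b^2)$. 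Taking $t = \epsilon/\sqrt{2}$ gives $2\exp(-n\epsilon^2/4b^2) \leq 2\exp(-n\epsilon^2/8b^2)$ for each of the two terms, and summing yields the claimed bound $4\exp(-n\epsilon^2/8b^2)$. (This route in fact proves the slightly stronger constant $1/4b^2$; I would still state it with $1/8b^2$ to match the lemma.)

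There is no real obstacle here: the only points to be careful about are that passing to real and imaginary parts preserves independence (a routine fact), and the bookkeeping of the $\sqrt{2}$ lost in the geometric split against the gap between $1/4b^2$ and $1/8b^2$ in the exponent, which leaves room to spare. If one preferred the appendix to be self-contained rather than invoke Hoeffding's inequality as a black box, one would instead insert the standard exponential-moment argument — Markov's inequality applied to $e^{\lambda(\sum_j A_j - n\re(\mu))}$, combined with Hoeffding's lemma $\EE\bigl[e^{\lambda(A_j - \re(\mu))}\bigr] \leq e^{\lambda^2 b^2/2}$, followed by optimizing over $\lambda$ — in place of the citation.
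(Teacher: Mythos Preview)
Your proof is correct and follows essentially the same approach as the paper: split into real and imaginary parts, apply the real-valued Hoeffding bound to each, and combine via a union bound. The only cosmetic difference is that the paper uses the triangle-inequality split $|S-\mu| \leq |S_r-\mu_r| + |S_i-\mu_i|$ with threshold $\epsilon/2$ (yielding exactly $8b^2$ in the exponent), whereas you use the Pythagorean split with threshold $\epsilon/\sqrt{2}$ and then relax the resulting $4b^2$ to $8b^2$.
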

\begin{proof}
We expand the $X_j$ into real and imaginary parts and apply the standard bound. Set $S_r = \sum_j \re[X_j]/n$ and $S_i = \sum_j \im[X_j]/n$ and $\mu_r = \EE[\re[X_j]]$ and $\mu_i = \EE[\im[X_j]]$. Note that $|\re[X_j]| \leq b$ and $|\im[X_j]| \leq b$. By the Chernoff-Hoeffding bound for real-valued random variables~\cite{Hoeffding63}, we have
$$
\text{Pr}\left[ \left|S_r - \mu_r \right| \geq \epsilon/2 \right] \leq 2 \exp\left(-n\epsilon^2/8b^2\right)\,,
$$
and likewise for the imaginary part. Taking the union bound, we have that 
$$
| S - \mu | = \left| S_r - \mu_r + i(S_i - \mu_i) \right| \leq |S_r - \mu_r| + |S_i - \mu_i| \leq \epsilon/2 + \epsilon/2 = \epsilon
$$
except with probability $4 \exp\left(-n\epsilon^2/8b^2\right)$.
\end{proof}

\section{Acknowledgments}
We thank M{\=a}ris Ozols for pointing out an error in a previous version of this manuscript. We acknowledge funding provided by the Institute for Quantum Information and Matter, an NSF Physics Frontiers Center with support of the Gordon and Betty Moore Foundation through Grant GBMF1250. We would also like to acknowledge the support from the Summer Undergraduate Research Fellowship (SURF) program, as well as the David L. Goodstein SURF endowment. Portions of this paper are a contribution of NIST, an agency of the US government, and are not subject to US copyright.

\bibliography{ybe}

\begin{thebibliography}{10}

\bibitem{Artin25}
Emil Artin.
\newblock Theorie der {Zopfe}.
\newblock {\em Abh. Math. Sem. Univ. Hamburg}, 4:47--72, 1925.

\bibitem{Baez92}
John~C. Baez.
\newblock Braids and quantization (online lecture notes), May 1992.

\bibitem{bremner}
M.~Bremner, R.~Jozsa, and D.~J. Sheperd.
\newblock Classical simulation of commuting quantum computations implies
  collapse of the polynomial hierarchy.
\newblock {\em Proc. R. Soc. A}, 467:459, 2011.

\bibitem{Brylinski02}
J.~L. Brylinski and R.~Brylinski.
\newblock Universal quantum gates.
\newblock In {\em Mathematics of Quantum Computation}. Chapman \& Hall, 2002.

\bibitem{Dye03}
H.~A. Dye.
\newblock Unitary solutions to the {Yang-Baxter} equation in dimension four.
\newblock {\em Quantum Information Processing}, 2(1/2):117--152, April 2003.

\bibitem{Franko10}
Jennifer~M. Franko.
\newblock Braid group representations arising from the {Yang-Baxter} equation.
\newblock {\em Journal of Knot Theory and Its Ramifications}, 19(04):525--538,
  2010.

\bibitem{FRW06}
Jennifer~M. Franko, Eric~C. Rowell, and Zhenghan Wang.
\newblock Extraspecial 2-groups and images of braid group representations.
\newblock {\em Journal of Knot Theory and Its Ramifications}, 15(04):413--427,
  2006.

\bibitem{FKW02}
Michael~H. Freedman, Alexei Kitaev, and Zhenghan Wang.
\newblock Simulation of topological field theories by quantum computers.
\newblock {\em Communications in Mathematical Physics}, 227:587--603, 2002.

\bibitem{FLW02a}
Michael~H. Freedman, Michael~J. Larsen, and Zhenghan Wang.
\newblock A modular functor which is universal for quantum computation.
\newblock {\em Communications in Mathematical Physics}, 227:605--622, 2002.

\bibitem{FLW02b}
Michael~H. Freedman, Michael~J. Larsen, and Zhenghan Wang.
\newblock The two-eigenvalue problem and density of {J}ones representation of
  braid groups.
\newblock {\em Communications in Mathematical Physics}, 228:177--199, 2002.

\bibitem{Gottesman}
Daniel Gottesman.
\newblock The {H}eisenberg representation of quantum computers.
\newblock {\em Proceedings of the XXII International Colloquium on Group
  Theoretical Methods in Physics}, (22):32--43, 1999.

\bibitem{Hietarinta93}
Jarmo Hietarinta.
\newblock Solving the two-dimensional constant quantum {Yang-Baxter} equation.
\newblock {\em Journal of Mathematical Physics}, 34(5):1725--1756, 1993.

\bibitem{Hoeffding63}
Wassily Hoeffding.
\newblock Probability inequalities for sums of bounded random variables.
\newblock {\em Journal of the American Statistical Association}, 58:13--30,
  1963.

\bibitem{JanzingWocjan07}
Dominik Janzing and Pawel Wocjan.
\newblock A simple {PromiseBQP}-complete matrix problem.
\newblock {\em Theory of Computing}, 3(4):61--79, 2007.

\bibitem{Jones85}
Vaughan F.~R. Jones.
\newblock A polynomial invariant for knots via von {N}eumann algebras.
\newblock {\em Bull. Amer. Math. Soc. (N.S.)}, 12(1):103--111, 1985.

\bibitem{miyake}
R.~{Jozsa} and A.~{Miyake}.
\newblock {Matchgates and classical simulation of quantum circuits}.
\newblock {\em Royal Society of London Proceedings Series A}, 464:3089--3106,
  December 2008.

\bibitem{JozsaVanDenNest}
Richard Jozsa and Maarten~Van den Nest.
\newblock Classical simulation complexity of extended clifford circuits.
\newblock 2013.

\bibitem{KauffmanLomonaco04}
Louis~H. Kauffman and Samuel J.~Lomonaco Jr.
\newblock Braiding operators are universal quantum gates.
\newblock {\em New Journal of Physics}, 6(1):134, 2004.

\bibitem{Rains}
Gabriele Nebe, E.~M. Rains, and N.~J.~A. Sloane.
\newblock The invariants of the {C}lifford groups.
\newblock {\em Designs, Codes and Cryptography}, 24(1):99--122, 2001.

\bibitem{NielsenChuang00}
Michael~A. Nielsen and Isaac~L. Chuang.
\newblock {\em Quantum Computation and Quantum Information}.
\newblock Cambridge University Press, 2000.

\bibitem{PerkYang06}
J.~H.~H. Perk and H.~Au-Yang.
\newblock {Y}ang-{B}axter equation.
\newblock In {\em Encyclopedia of Mathematical Physics}, pages 465--473.
  Oxford: Elsevier, 2006.

\bibitem{Preskill04}
John Preskill.
\newblock Topological quantum computation (online lecture notes), 2004.

\bibitem{RowellWang12}
Eric~C. Rowell and Zhenghan Wang.
\newblock Localization of unitary braid group representations.
\newblock {\em Communications in Mathematical Physics}, 311(3):595--615, 2012.

\bibitem{Turaev88}
V.~G. Turaev.
\newblock The {Yang-Baxter} equation and invariants of links.
\newblock {\em Invent. Math.}, 92:527--553, 1988.

\end{thebibliography}

\end{document}